\def \XX {\bm{X}}
\def \E {\mathbb{E}}
\def \d {\textrm{d}}
\def \aa {\boldsymbol{a}}
\def \bb {\boldsymbol{b}}
\def \NN {\mathbb{N}}
\def \RR {\mathbb{R}}
\def \XX {\boldsymbol{X}}
\def \YY {\boldsymbol{Y}}
\def\VaR{{\mathrm{VaR}}}
\numberwithin{equation}{section} 
\newtheorem{theorem}{Theorem}[section]
\newtheorem{definition}[theorem]{Definition}
\newtheorem{example}[theorem]{Example}
\newtheorem{remark}[theorem]{Remark}
\newtheorem{proposition}[theorem]{Proposition}
\newtheorem{corollary}[theorem]{Corollary}
\definecolor{darkread}{rgb}{0.7, 0, 0}
\begin{document}
\doublespace
\title{\Large\textbf{{VaR at Its Extremes: Impossibilities and Conditions for One-Sided Random Variables}}}

\author[]{\small Nawaf Mohammed \thanks{\url{nawaf.mohammed.ac@gmail.com}}}

\affil[]{\footnotesize }
\date{}
\maketitle
\vspace{-1cm}
\begin{abstract}
We investigate the extremal aggregation behavior of Value-at-Risk (VaR) -- that is, its additivity properties across all probability levels -- for sums of one-sided random variables. For risks supported on \([0,\infty)\), we show that VaR sub-additivity is impossible except in the degenerate case of exact additivity, which holds only under co-monotonicity. To characterize when VaR is instead fully super-additive, we introduce two structural conditions: negative simplex dependence (NSD) for the joint distribution and simplex dominance (SD) for a margin-dependent functional. Together, these conditions provide a unified and easily verifiable framework that accommodates non-identical margins, heavy-tailed laws, and a wide spectrum of negative dependence structures. All results extend to random variables with arbitrary finite lower or upper endpoints, yielding sharp constraints on when strict sub- or super-additivity can occur.
\vspace*{1cm}
~\\
{{\em Key words and phrases}: Value-at-Risk; VaR sub-additivity; VaR super-additivity; one-sided random variables; negative simplex dependence; simplex dominant functions}

\smallskip
\noindent
{{\em JEL Classification}: G22; G32; C46; C65 }

\end{abstract}

\newpage
{\color{black}
\section{Introduction}
\label{sec:intro}
The study of quantiles has long been a cornerstone of mathematical and statistical research. Quantiles provide a fundamental link between abstract probability models and the observable outcomes they generate. In risk management, in particular, quantiles have been regarded as essential tools for assessing the riskiness of losses, asset prices, and other financial variables. Their most prominent manifestation is the Value-at-Risk (VaR) \citep{Linsmeier2000}, which is interpreted as the minimum amount of capital a financial institution must hold so that losses exceed this level only with a small, pre-specified probability.

A variety of risk measures have been proposed to refine or extend VaR. Among them, the conditional tail expectation -- also known as expected shortfall -- \citep{Acerbi.2002,Tasche2002} which addresses several of its well-known limitations, most notably its potential violation of sub-additivity. Nevertheless, VaR has continued to attract considerable attention in both finance and actuarial science. One reason is its distinctive ability to capture situations in which risk aggregation leads to super-additivity, reflecting the possibility that diversification may fail in the presence of heavy-tailed risks. While this phenomenon lies outside the integrability framework underlying many prominent risk measures, it provides important insights into the behavior of extreme losses that arises naturally in insurance applications \citep{Embrechts1997}. In this paper, we investigate the extremal aggregation behavior of VaR and identify conditions under which it exhibits sub-additivity or super-additivity uniformly across all probability levels.

The contributions of this paper are threefold. First, we establish an impossibility theorem for VaR sub-additivity (Theorem \ref{thm:VaRsubadditive}). Specifically, for random variables supported on $[0,\infty)$, VaR sub-additivity and exact additivity are equivalent, where both can occur exclusively under co-monotonicity. This result extends the main finding of \citet{Imamura2025}, which was derived under the assumption of finite expectations, to the more general setting that allows for risks with infinite means. Second, we provide a new characterization of full VaR super-additivity (Theorem \ref{thm:Varsuperadditivity}) based on two easily verifiable structural conditions: negative simplex dependence imposed on the joint distribution, and simplex dominance imposed on a margin-dependent aggregator function. In contrast to earlier contributions in the literature (see, for example, \citet{Chen2025,Chen2026,Arab2025,Mueller2025}), our framework accommodates non-identically distributed margins and encompasses a broad class of negative dependence structures beyond negative lower orthant dependence. Third, we extend the analysis to random variables with arbitrary finite endpoints, whether lower or upper, and establish an exact duality between the sub-additive and super-additive regimes.

To formalize our analysis, we consider random vectors $\XX = (X_1,\dots,X_n)$, $n \in \NN$, whose components are non-degenerate random variables representing, for example, asset prices or insurance losses. We focus in particular on their aggregate,
\[
S = \sum_{i=1}^n X_i.
\]

For any random variable or random vector, we denote its probability density function, cumulative distribution function (CDF), and decumulative (survival) function (DDF) by \( f \), \( F \), and \( \overline{F} \), respectively, using subscripts to indicate the relevant variables. For example, \( F_{\XX} \) denotes the joint CDF of the random vector \( \mathbf{X} \), while \( F_{X_i} \) denotes the marginal CDF of \( X_i \) for \( i \in \{1,\dots,n\} \). Unless explicitly stated, we impose no integrability assumptions on the random variables.

Throughout Sections \ref{sec:VaR_subadditivity} and \ref{sec:VaR_super-additivity} of the paper, we assume that each random variable \( X_i \) has support with lower endpoint at zero, that is,
\[
a_i = \sup\{x \in \mathbb{R} : F_{X_i}(x) \le 0\}=0,
\qquad \forall i \in \{1,\dots,n\}.
\]

In the final section, Section \ref{sec:Generalizations}, we show how this assumption can be relaxed. In particular, we extend all results to the setting where the lower endpoints \(a_i\in\RR\) are arbitrary but finite, and also to the reflected setting in which the random variables are instead bounded above.

Finally, for any random variable $Z$, we define its VaR at confidence level $p \in (0,1)$ as the left-quantile (left-inverse) of its distribution:
\[
\mathrm{VaR}_p[Z] = \inf\{z \in \RR : F_Z(z) \ge p\}.
\]

Our primary objective is to investigate how the VaR of the sum, 
\(\VaR_p[S]\), compares to the sum of the individual VaRs, 
\(\sum_{i=1}^n\VaR_p[X_i]\) for all $p\in(0,1)$. In actuarial practice, such aggregation questions arise naturally when insurers combine heterogeneous lines of business -- such as catastrophe, liability, and operational risks -- which may depend upon one another and whose marginal distributions may differ substantially. To provide a precise framework for this comparison, we 
introduce the following definitions.
\begin{definition}
\label{def:VaRsubsuper}
We say that $\XX$ is 
{VaR sub-additive} %or simply {sub-additive}
(respectively, {VaR super-additive}) if
\begin{equation}
\label{eq:VaRsubsuper}
\VaR_p[S] \le (\ge) \; \sum_{i=1}^n\VaR_p[X_i], 
\quad \forall p \in (0,1).
\end{equation}
In particular, $\XX$ is called {VaR additive} if equality holds for all probability levels $p \in (0,1)$.
\end{definition}

The remainder of the paper is organized as follows. Section \ref{sec:VaR_subadditivity} examines VaR sub-additivity for non-negative risks and identifies the conditions under which it arises. Section \ref{sec:VaR_super-additivity} analyzes VaR super-additivity and develops the framework used to characterize this phenomenon. Section \ref{sec:Generalizations} broadens the analysis to random variables with arbitrary finite lower endpoints and to the complementary case of variables bounded above. Section \ref{sec:conclusions} concludes the paper.
\section{VaR Sub-additivity}
\label{sec:VaR_subadditivity}
VaR sub-additivity is widely regarded as a desirable property, as it reflects the risk-reducing effect of diversification. In pursuit of this property, numerous alternative risk measures have been introduced to guarantee it. The literature has examined VaR sub-additivity in various settings, including asymptotic regimes \citep{Danielsson2013} and classes of  distributions such as the elliptical distributions where VaR is known to be sub-additive for confidence levels $p\ge\dfrac{1}{2}$ \citep{McNeil2015}. A recent and simple result of \citet{Imamura2025} shows that, for any random variables with finite expectations, imposing VaR sub-additivity at all confidence levels is equivalent to co-monotonicity -- the extremal form of positive dependence -- and hence to exact VaR additivity. This equivalence still holds, stated in Theorem~\ref{thm:VaRsubadditive} below, for the full class of non-negative random variables, with no moment conditions whatsoever. Our proof constructs a truncated family $\XX_k$ of integrable approximations, applies the result of \citet{Imamura2025} to each, and recovers the original vector via a monotone convergence argument. Before presenting this result, we recall the notion of co-monotonicity \citep{Dhaene2002}.
\begin{definition}
\label{def:co-monotonicity}
A random vector $\XX$ is co-monotonic if its joint CDF $F_{\XX}$ is the Fr\'echet upper bound 
\begin{equation*}
F_{\XX}(x_1,\dots,x_n)=\min\left\{F_{X_{1}}(x_1),\dots,F_{X_{n}}(x_n)\right\}.
\end{equation*}
\end{definition}
\begin{theorem}
\label{thm:VaRsubadditive}
$\XX$ is VaR sub-additive if and only if $\XX$ is VaR additive. In addition, $\XX$ must be a co-monotonic vector.
\end{theorem}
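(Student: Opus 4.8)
The plan is to prove the two nontrivial implications---that VaR sub-additivity forces VaR additivity, and that additivity in turn forces co-monotonicity---since the reverse implication (additivity $\Rightarrow$ sub-additivity) is immediate from Definition~\ref{def:VaRsubsuper}. Throughout I would write $q_S(p)=\VaR_p[S]$ and introduce the co-monotonic reference vector $\XX^c=(X_1^c,\dots,X_n^c)$ with $X_i^c=F_{X_i}^{-1}(U)$ for a common $U\sim\mathrm{Unif}(0,1)$, and $S^c=\sum_{i=1}^n X_i^c$. By the classical co-monotonic additivity of quantiles, $\VaR_p[S^c]=\sum_{i=1}^n\VaR_p[X_i]$ for every $p$, so the sub-additivity hypothesis is exactly the pointwise quantile inequality $q_S(p)\le q_{S^c}(p)$ for all $p\in(0,1)$. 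The whole argument then reduces to producing the reverse inequality in an integrated form and squeezing.

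First I would establish the reverse comparison $\int_0^s q_S(u)\,\d u\ge \int_0^s q_{S^c}(u)\,\d u$ for every $s\in(0,1)$. The key device is the conjugate (Legendre) representation of the lower partial quantile integral,
\[
\int_0^s q_Z(u)\,\d u=\sup_{t\ge 0}\Big(st-\E\big[(t-Z)^+\big]\Big),
\]
valid for any non-negative $Z$. Because each $X_i\ge 0$, the map $(x_1,\dots,x_n)\mapsto (t-\sum_i x_i)^+$ is bounded on the support and supermodular (being a convex, hence supermodular, function of the coordinate sum), so Tchen's theorem---that the Fr\'echet upper bound maximizes expectations of supermodular functions within a fixed Fr\'echet class---yields $\E[(t-S)^+]\le \E[(t-S^c)^+]$ for all $t\ge 0$. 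Feeding this into the representation above gives the claimed integral inequality. I would emphasize that the non-negativity of the margins is precisely what makes $(t-\sum_i x_i)^+$ bounded, so that all the stop-loss expectations are finite and the argument needs \emph{no} integrability assumption; this is where one-sidedness is essential, and it is what lets the result cover heavy-tailed laws.

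Combining the two directions, $q_S\le q_{S^c}$ pointwise forces $\int_0^s q_S\le\int_0^s q_{S^c}$, while the previous step gives the opposite inequality; hence $\int_0^s q_S(u)\,\d u=\int_0^s q_{S^c}(u)\,\d u$ for every $s$. Since two non-decreasing, left-continuous functions agreeing almost everywhere agree everywhere, this yields $q_S\equiv q_{S^c}$, that is, $\VaR_p[S]=\sum_{i=1}^n\VaR_p[X_i]$ for all $p$---VaR additivity---and in particular $S\eqd S^c$. For the final co-monotonicity claim I would invoke the equality case: $S\eqd S^c$ together with the fact that $S^c$ is the convex-order-largest sum in the Fr\'echet class forces the coupling itself to be the Fr\'echet upper bound (the characterization of co-monotonicity through the distribution of the component sum). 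I expect this last step to be the main obstacle, because equality of the \emph{sum} distributions is a priori weaker than equality of the \emph{joint} laws; the resolution is that the comonotonic coupling is the essentially unique maximizer of convex order for the given margins, so equality propagates from the sum back to the copula, giving the asserted co-monotonicity.
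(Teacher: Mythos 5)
Your argument up through VaR additivity is correct and is a genuinely different route from the paper's. You prove ``sub-additive $\Rightarrow$ additive'' directly: sub-additivity is the pointwise bound $\VaR_p[S]\le\VaR_p[S^c]$ against the comonotonic sum, Tchen's supermodular inequality applied to the bounded functions $(t-\sum_i x_i)^+$ gives $\E[(t-S)^+]\le\E[(t-S^c)^+]$ for every $t\ge 0$, and the conjugate representation $\int_0^s \VaR_u[Z]\,\d u=\sup_{t\ge0}\,\bigl(st-\E[(t-Z)^+]\bigr)$ converts this into the reverse inequality for the partial quantile integrals; squeezing gives $\VaR_p[S]=\VaR_p[S^c]$ for all $p\in(0,1)$. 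Since $S\ge 0$ makes every $\E[(t-S)^+]$ finite, no integrability is needed, exactly as you emphasize. The paper instead conditions on $\{S\le k\}$, shows the conditioned vector inherits sub-additivity, invokes the integrable-case theorem of \cite{Imamura2025} as a black box, and lets $k\to\infty$. Your route is more self-contained for this half of the statement and isolates cleanly where one-sidedness enters.

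The gap is in the final step, and you flagged the right spot but did not close it. The claim that $S\eqd S^c$ forces the coupling to be the Fr\'echet upper bound is Cheung's characterization of comonotonicity, and that result -- together with its ``essentially unique maximizer in convex order'' proof -- is stated and proved under finite-mean assumptions. That assumption is not a technicality here: the non-integrable case is the only case at issue, since with integrable margins the whole theorem is already Imamura's. When some $\E[X_i]=\infty$, one has $\E[(S-t)^+]=\E[(S^c-t)^+]=\infty$ for every $t$, so ``convex-order-largest'' is vacuous and the uniqueness argument has nothing to bite on; your own duality step only yields equality of the lower transforms $\E[(t-S)^+]=\E[(t-S^c)^+]$, and the assertion that this equality (equivalently $S\eqd S^c$) propagates back to the joint law is precisely the statement still to be proved -- as written, your ``resolution'' restates the conclusion rather than deriving it. To close the gap you need either a genuine equality-case analysis of Tchen's inequality for the bounded supermodular functions $(t-\cdot)^+$, or the paper's device: condition on $\{S\le k\}$ so that all components become bounded, apply the integrable-case characterization to the conditioned vector, and recover comonotonicity of $\XX$ by letting $k\to\infty$ in the joint CDFs.
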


\begin{proof}
The reverse implication follows trivially from Definition ~\ref{def:VaRsubsuper}.

\medskip

For the 'only if' implication, suppose that $\XX$ is VaR sub-additive.  
Fix any constant \(k>0\), and define the truncated random vector
\[
\XX_k = (X_{1,k}, \dots, X_{n,k}), 
\qquad 
X_{i,k} \stackrel{d}{\coloneqq} X_i \mid S \le k,\quad i\in\{1,\dots,n\},
\]
and let
\[
S_k = \sum_{i=1}^n X_{i,k}.
\]

Since each $X_i$ has lower endpoint zero, $S\ge0$ and therefore $\mathbb P(S\le k)>0$ for any $k>0$, so the conditional variables $X_{i,k}$ are well defined.

The CDFs of \(X_{i,k}\) and \(S_k\) can be written as
\[
F_{X_{i,k}}(x_i) =
\begin{cases}
\dfrac{\mathbb{P}(X_i \le x_i,\, S \le k)}{F_S(k)}, & x_i < k, \\[1em]
1, & x_i \ge k,
\end{cases} \quad\mathrm{and}\quad
F_{S_k}(s) =
\begin{cases}
\dfrac{F_S(s)}{F_S(k)}, & s < k, \\[0.5em]
1, & s \ge k.
\end{cases}
\]

Next, define random variables $(\widetilde{X}_{1,k}, \dots, \widetilde{X}_{n,k})$ via
\begin{align*}
F_{\widetilde{X}_{i,k}}(x_i) &=
\begin{cases}
\dfrac{F_{X_i}(x_i)}{F_S(k)}, & x_i < \VaR_{F_S(k)}[X_i], \\[1em]
1, & x_i \ge \VaR_{F_S(k)}[X_i].
\end{cases}
\end{align*}

For $x_i < \VaR_{F_S(k)}[X_i]$, we have $F_{X_i}(x_i) < F_S(k) \le F_{X_i}(\VaR_{F_S(k)}[X_i])$. Then the ratio $\dfrac{F_{X_i}(x_i)}{F_S(k)}$ is strictly less than 1, and $x_i<k$, so the CDFs are well-defined.

From the definitions of $F_{X_{i,k}}$ and $F_{\widetilde{X}_{i,k}}$, we observe that  
\[
F_{X_{i,k}}(x_i) \le F_{\widetilde{X}_{i,k}}(x_i), \qquad \forall x_i \in [0,\infty),
\]
which implies
\[
\VaR_p[\widetilde{X}_{i,k}]
= \VaR_{p F_S(k)}[X_i]
\le \VaR_p[X_{i,k}], 
\qquad \forall p \in (0,1).
\]

Similarly, by definition of $F_{S_k}$,
\[
\VaR_p[S_k] = \VaR_{p F_S(k)}[S], \qquad \forall p \in (0,1).
\]

Since $\XX$ is VaR sub-additive,
\[
\VaR_p[S_k] 
= \VaR_{p F_S(k)}[S] \le \sum_{i=1}^n\VaR_{p F_S(k)}[X_i]\le \sum_{i=1}^n\VaR_p[X_{i,k}],
\]
and therefore,
\[
\VaR_p[S_k]
\le
\sum_{i=1}^n\VaR_p[X_{i,k}], 
\qquad \forall p \in (0,1).
\]

Hence, $\XX_k$ is also VaR sub-additive.

Since each $X_{i,k}$ has a finite expectation ($\E[X_{i,k}] \le k < \infty$), Theorem 1 in \citet{Imamura2025} implies that $\XX_k$ is co-monotonic and therefore VaR additive i.e.:
\[
F_{\XX_k}(x_1, \dots, x_n)
= \min\{ F_{X_{1,k}}(x_1), \dots, F_{X_{n,k}}(x_n) \},
\]
and
\[
\VaR_p[S_k]
= \sum_{i=1}^n\VaR_p[X_{i,k}], 
\qquad \forall p \in (0,1).
\]

Finally, by the monotone convergence of both the numerator  
$\mathbb{P}(X_i \le x_i, S \le k)$  
and denominator $\mathbb{P}(S \le k)$, each marginal CDF satisfies
\[
F_{X_{i,k}}(x_i) \to F_{X_i}(x_i),
\qquad \text{as } k \to \infty,
\]
for every $x_i \in [0,\infty)$. Hence,
\[
F_{\XX_k}(x_1, \dots, x_n)
\to 
F_{\XX}(x_1, \dots, x_n),
\qquad 
\forall (x_1,\dots,x_n)\in[0,\infty)^n,
\]
where
\[
F_{\XX}(x_1, \dots, x_n)
=
\min\{F_{X_1}(x_1), \dots, F_{X_n}(x_n)\}.
\]

Thus, $\XX$ is co-monotonic and VaR additive.
\end{proof}

Theorem~\ref{thm:VaRsubadditive} completes the picture initiated by \citet{Imamura2025}. For right-sided random variables, VaR sub-additivity at all confidence levels is not merely sufficient but {necessary} for co-monotonicity, regardless of whether the marginals possess finite expectations. In other words, the only dependence structure compatible with VaR sub-additivity in this setting is the most extreme form of positive dependence, under which diversification provides no benefit of any kind. This result has a clear practical implication: within the class of non-negative risks, seeking sub-additive VaR aggregation is equivalent to accepting that the portfolio components move in perfect lockstep.
\section{VaR super-additivity}
\label{sec:VaR_super-additivity}
Unlike the sub-additivity property of the VaR, the opposite effect -- VaR super-additivity -- can in fact arise. For instance, consider the case of a counter-monotonic random vector defined as
\[
\XX=\left(X,\frac{1}{X}\right),
\]
whose joint CDF is given by the Fr\'echet lower bound:
\[
F_{\XX}(x_1,x_2)=\max\{F_{X}(x_1)+F_{1/X}(x_2)-1,\,0\},
\]
where $X$ follows a Type II Pareto distribution with CDF
\[
F_{X}(x)=1-\left(\frac{\theta}{\theta+x}\right)^{\alpha}, \qquad x\ge 0,\; \alpha,\theta>0.
\]
For simplicity, take $\alpha=1/2$ and $\theta=1$. The VaRs of the non-identical margins are then given as
\[
\VaR_p[X]=\frac{p\,(2-p)}{(1-p)^2},\quad\VaR_p\left[\dfrac{1}{X}\right]=\frac{p^2}{1-p^2}.
\]
A direct calculation shows that the VaR of the sum is
\[
\VaR_p\left[X+\dfrac{1}{X}\right]=\frac{2\left(1+2p^2-p^4\right)}{\left(1-p^2\right)^2}.
\]
Since
\[
\frac{2\left(1+2p^2-p^4\right)}{\left(1-p^2\right)^2}>\frac{2 p\, \left(1+p-p^2\right)}{(1-p)^2 (1+p)}, \qquad \forall\, p\in(0,1),
\]
the vector $\XX$ is VaR super-additive. This example demonstrates that VaR super-additivity can appear naturally from a suitable choice of dependence and margins.

Although many studies have focused on tail-level VaR super-additivity \citep{Embrechts2008,Embrechts2009b,Zhu2023}, recent years have witnessed substantial progress on full-range VaR super-additivity, largely analyzed through the lens of first-order stochastic dominance. Formally, this body of work considers random variables $X$ satisfying
\begin{equation}
\label{eq:stochdominance}
X\le_{st}\,\sum_{i=1}^n\theta_iX_i,
\end{equation}
where $X_1,\dots,X_n$ are independent variables identically distributed to $X$, with $\theta_i\in(0,1)$ and $\sum_{i=1}^n \theta_i=1$.

Within this framework, \citet{Ibragimov2009} established the stochastic dominance relation for right-sided stable distributions. Subsequently, \citet{Chen2025} relaxed the independence assumption and identified a class of weakly negatively associated super-Pareto risks exhibiting full VaR super-additivity. This class was further extended in \citet{Chen2026}, where the analysis relies on the sub-additivity of the function $\xi(x)=-\log F_X(1/x)$ i.e. $\xi(x+y)\le \xi(x)+\xi(y)$, allowing the results to additionally cover risks that are negatively lower orthant dependent. Other contributions, such as \citet{Arab2025} and \citet{Mueller2025}, maintain the independence assumption but broaden the classes introduced in \citet{Chen2026} and \citet{Ibragimov2009}. In particular, they introduce the InvSub class -- those $X$ for which $\overline{F}_X(1/x)$ is sub-additive -- and the super-Cauchy class, respectively. Taken together, these results demonstrate that VaR super-additivity is not merely a pathological anomaly but rather arises naturally under economically meaningful and probabilistically coherent conditions.

The simple example presented at the beginning of this section does not satisfy the stochastic dominance formulation in \eqref{eq:stochdominance} and therefore falls outside the classes studied thus far. The objective of this section is to develop a complementary theory to \citet{Chen2026} accommodating non-identical margins and a wider range of dependence structures. Although the class introduced in \citet{Arab2025} contains that of \citet{Chen2026}, this inclusion holds only under independence and does not extend to our general setting. Likewise, the super-Cauchy class studied in \citet{Mueller2025} intersects with that of \citet{Chen2026}, but it is again restricted to independence and, moreover, includes random variables supported on the entire real line which falls outside our one-sided scope.

Intrinsically, the non-identical limitation arise because the stochastic dominance framework in \eqref{eq:stochdominance} does not extend to non-identical risks. To address this gap, we develop a new analytical framework for characterizing VaR super-additivity for right-sided random variables. Our approach allows for heterogeneous marginal distributions and naturally extends the analysis to dependence structures beyond negative lower orthant dependence.

Within this broader setting, nonetheless, VaR super-additivity remains subject to inherent non-integrability requirements. In particular, Theorem~1 of \citet{Imamura2025} imposes strong constraints on when super-additivity can happen. We summarize this implication in the following remark.
\begin{remark}
\label{rm:super-additivityImamura}
If all components $X_i$ are integrable, i.e.\ $\E[X_i]<\infty$ for all $i\in\{1,\dots,n\}$, then Theorem 1 of \citet{Imamura2025} implies that VaR super-additivity must degenerate into exact VaR additivity, and $\XX$ must be co-monotonic. This follows directly from their result: if $\XX$ has integrable components and is VaR super-additive, then the reflected variables $-\XX$ are integrable and VaR sub-additive, and hence co-monotonic. Since co-monotonicity of $-\XX$ is equivalent to co-monotonicity of $\XX$, the conclusion follows.
\end{remark}

Remark \ref{rm:super-additivityImamura} therefore shows that, in order to construct examples of random vectors $\XX$ that are genuinely VaR super-additive, at least one component must be non-integrable. The counter-monotonic example presented earlier illustrates this requirement, as both variables possess infinite means. The converse, however, does not hold. Non-integrability is fundamentally a tail property and, by itself, does not guarantee full VaR super-additivity. The following example illustrates this point.
\begin{example}
\label{example:VaRfailssuperadditivity}
Let \(\XX\) be a bivariate random vector.
\begin{itemize}
\item[(1)] Suppose \(\XX\) is counter-monotonic and defined by
\[
\XX = \left( X,\; \frac{1}{1+X} \right),
\]
where \(X\) follows a Pareto Type II distribution with unit scale and unit shape.
Since
\[
\frac{1}{1+X} \sim \mathrm{Unif}(0,1),
\]
we obtain \(\mathbb{E}\!\left[\dfrac{1}{ 1+X}\right] = \dfrac{1}{2}\), while \(\mathbb{E}[X] = \infty\) for this Pareto distribution.

The marginal VaR functions are therefore
\[
\VaR_p[X] = \frac{p}{1-p},
\qquad
\VaR_p\!\left[\frac{1}{1+X}\right] = p.
\]

For the sum \(S = X + \dfrac{1}{ 1+X}\), one can show that
\[
\VaR_p[S] = \frac{p^2 - p + 1}{1 - p}.
\]

Comparing \(\VaR_p[S]\) with \(\VaR_p[X] + \VaR_p\!\left[\dfrac{1}{ 1+X}\right]\) reveals that VaR is {super-additive} for
\(
p \in \left(0,\dfrac12\right],
\)
and {sub-additive} for
\(
p \in \left[\dfrac12,1\right).
\)
\item[(2)] Consider now a bivariate random vector \(\XX = (X_1, X_2)\) with joint distribution function
\[
F_{\XX}(x_1,x_2) = C\!\left(F_{X_1}(x_1),\, F_{X_2}(x_2)\right),
\]
where \(F_{X_1}\) and \(F_{X_2}\) are Pareto (II) marginal distributions with parameters 
\(\alpha_1=\alpha_2=\theta_1=\theta_2=1\). The copula \(C(u,v)\) is an {Ordinal Sum} copula (see Example~3.4 in \citet{10.5555/1952073}) given by
\[
C(u,v)=
\begin{cases}
\max\!\left\{u+v-\dfrac{1}{2},\,0\right\}, 
& (u,v)\in \left[0,\dfrac{1}{2}\right]^2,\\[0.6em]
\max\!\left\{u+v-1,\,\dfrac{1}{2}\right\}, 
& (u,v)\in \left(\dfrac{1}{2},1\right]^2,\\[0.6em]
\min\{u,v\}, 
& \text{otherwise}.
\end{cases}
\]
\begin{figure}[htb!]
\centering
\includegraphics[scale=0.7]{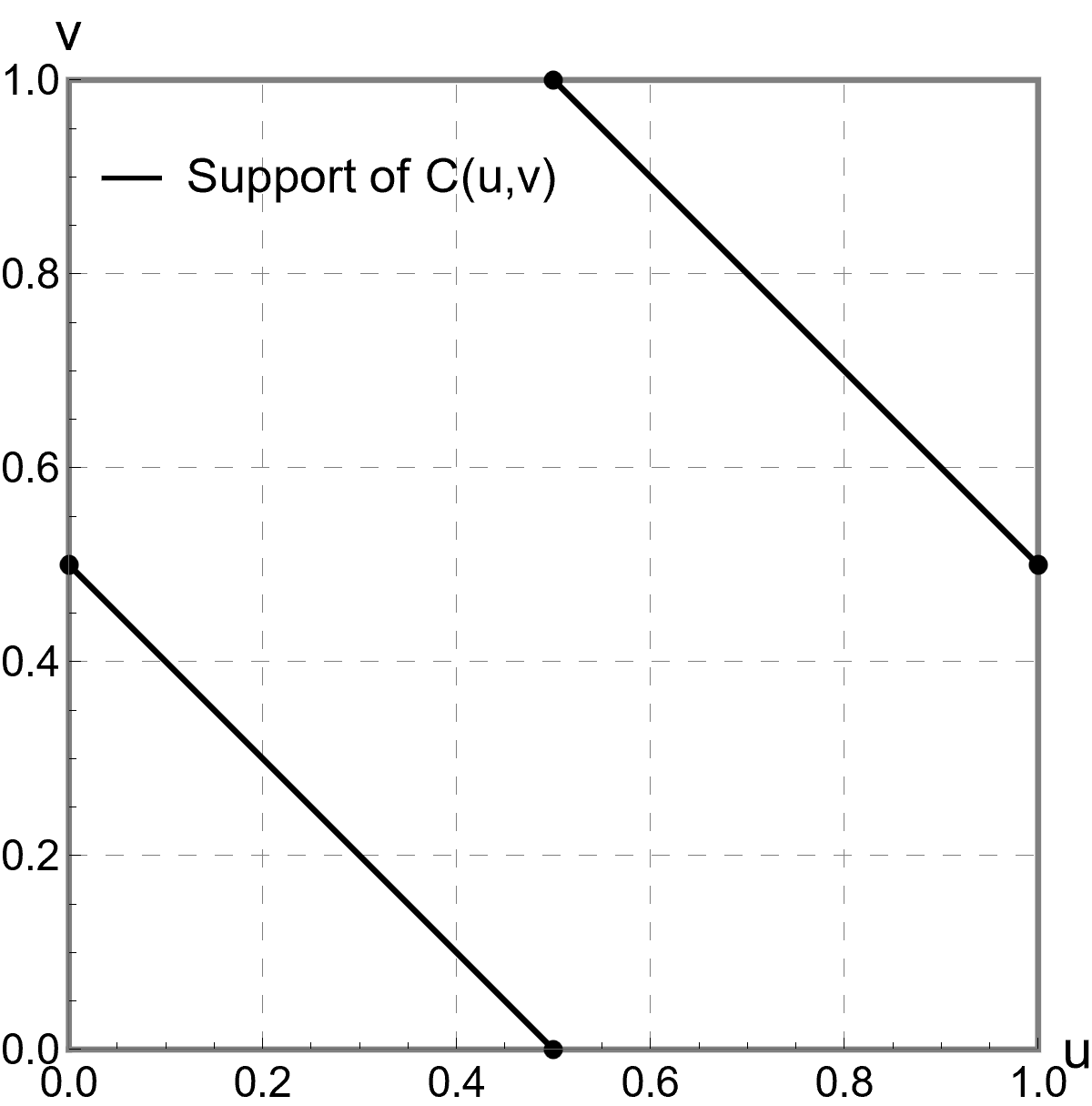}
\caption{Support of the Ordinal Sum copula \(C(u,v)\) on the unit square \([0,1]^2\).}
\label{fig:ordinal-sum-support}
\end{figure}

Since both margins are Pareto (II) with unit shape then their expectations are infinite. Their common VaR is
\[
\VaR_p[X_1]=\VaR_p[X_2]=\frac{p}{1-p}.
\]

For the sum \(S = X_1 + X_2\), the VaR is piecewise and given by
\[
\VaR_p[S] =
\begin{cases}
\dfrac{6 + 8p^2}{\,9 - 4p^2\,}, 
& 0 < p \le \dfrac{1}{2},\\[0.8em]
\dfrac{2 - 2p\,(1-p)}{p\,(1-p)}, 
& \dfrac{1}{2} < p < 1.
\end{cases}
\]

A direct comparison between \(\VaR_p[S]\) and 
\(\VaR_p[X_1] + \VaR_p[X_2]\) shows that \(\VaR_p[S]\) is {sub-additive} whenever
\[
p \in \left[\frac{3-\sqrt{6}}{2},\,\frac{1}{2}\right],
\]
and {super-additive} for all remaining values of \(p\).
\end{itemize}
\end{example}

Example~\ref{example:VaRfailssuperadditivity} demonstrates that even in cases where we have  
(1) counter-monotonic dependence with one non-integrable margin, and  
(2) two non-counter-monotonic, non-integrable margins,  
the resulting dependence-margin combination may still exhibit intervals of VaR sub-additivity. Thus, neither a particular dependence structure nor the mere non-integrability of margins is sufficient on its own to guarantee VaR super-additivity.

This indicates that VaR super-additivity cannot be deduced from dependence alone, nor from marginal tail behavior in isolation. Rather, it requires analyzing how the joint distribution interacts with the full set of marginal distributions. It is this interaction that determines whether a given random vector 
$\XX$ belongs to a class for which VaR is guaranteed to be super-additive.

Our objective, therefore, is to identify a dependence property together with a corresponding marginal behavior that, when combined, imply VaR super-additivity. Such a characterization must be sufficiently general to encompass the class in \citet{Chen2026}, yet tractable enough to allow for straightforward verification.

Before presenting our main result, we introduce two key concepts.
\begin{definition}
We say $\XX$ is negative simplex dependent (NSD) if 
\begin{equation*}
F_S(t)\le \prod_{i=1}^nF_{X_i}(t),\qquad \forall t\in[0,\infty).
\end{equation*}
\end{definition}
\begin{definition}
A function $\Phi: [0,\infty)^n\to (-\infty,0]$ is called simplex dominant (SD) if
\begin{equation*}
\Phi(x_1,\dots,x_n)\ge \Phi(t,\dots,t),\qquad t=\sum_{i=1}^n x_i,\quad \forall (x_1,\dots,x_n)\in [0,\infty)^n.
\end{equation*}
\end{definition}
\begin{theorem}
\label{thm:Varsuperadditivity}
If $\XX$ is NSD with continuous $F_{X_i}$, and the function
\begin{equation} 
\label{eq:Phiexpression}
\Phi(x_1,\dots,x_n)=\sum_{i=1}^n x_i \log F_{X_i}(x_i),
\end{equation} 
is SD, then $\XX$ is VaR super-additive.
\end{theorem}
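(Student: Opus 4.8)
The plan is to fix $p\in(0,1)$ and reduce the target inequality $\VaR_p[S]\ge\sum_{i=1}^n\VaR_p[X_i]$ to a one-sided bound on the distribution function of $S$. Writing $s_p=\sum_{i=1}^n\VaR_p[X_i]$, the left-quantile characterization of VaR shows that $\VaR_p[S]\ge s_p$ is equivalent to $F_S(t)<p$ for every $t<s_p$. Since each $X_i$ is a genuine $[0,\infty)$-valued random variable with continuous $F_{X_i}$, we have $0<\VaR_p[X_i]<\infty$ (because $F_{X_i}(0)=0<p$ by continuity and the zero lower endpoint, while $F_{X_i}\to 1$), so $s_p\in(0,\infty)$ and the degenerate case $s_p=0$ never arises.

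The core of the argument is the estimate, established at every level $r\in(0,1)$, that $F_S(s_r)\le r$ where $s_r=\sum_{i=1}^n\VaR_r[X_i]$. To prove it I would first invoke continuity of $F_{X_i}$ to get $F_{X_i}(\VaR_r[X_i])=r$, so that evaluating the functional in \eqref{eq:Phiexpression} at the quantile vector gives $\Phi(\VaR_r[X_1],\dots,\VaR_r[X_n])=\sum_{i=1}^n\VaR_r[X_i]\log r=s_r\log r$. Applying the SD property of $\Phi$ at this point yields $s_r\log r\ge\Phi(s_r,\dots,s_r)=s_r\sum_{i=1}^n\log F_{X_i}(s_r)$. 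Because $s_r>0$ and $F_{X_i}(s_r)\ge F_{X_i}(\VaR_r[X_i])=r>0$ (so all logarithms are finite), dividing by $s_r$ and exponentiating gives $r\ge\prod_{i=1}^nF_{X_i}(s_r)$. The NSD hypothesis then delivers $F_S(s_r)\le\prod_{i=1}^nF_{X_i}(s_r)\le r$, as claimed.

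The step I expect to be the main obstacle is upgrading this bound from "$\le$" to the strict inequality $F_S(t)<p$ required for $t<s_p$: the estimate only gives $F_S(s_p)\le p$, which is consistent with $F_S$ already attaining the level $p$ on an interval ending at $s_p$ and would wrongly force $\VaR_p[S]\le s_p$. I would resolve this by a limiting argument that exploits the availability of the estimate at every level. The left-quantile map $r\mapsto\VaR_r[X_i]$ is nondecreasing and left-continuous, so $s_r\uparrow s_p$ as $r\uparrow p$. Given any $t<s_p$, I can then choose $r<p$ with $s_r>t$, and monotonicity of $F_S$ together with the core estimate gives $F_S(t)\le F_S(s_r)\le r<p$. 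This establishes $F_S(t)<p$ for all $t<s_p$, hence $\VaR_p[S]\ge s_p$; as $p\in(0,1)$ was arbitrary, $\XX$ is VaR super-additive.
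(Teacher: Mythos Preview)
Your argument is correct and follows essentially the same route as the paper: evaluate $\Phi$ at the quantile vector, use SD together with continuity of the $F_{X_i}$ to obtain $r\ge\prod_{i=1}^n F_{X_i}(s_r)$, and then apply NSD to conclude $F_S(s_r)\le r$. Your final limiting step, upgrading $F_S(s_r)\le r$ (valid for every $r$) to the strict inequality $F_S(t)<p$ for all $t<s_p$ via left-continuity of $r\mapsto s_r$, is in fact more careful than the paper, which simply identifies $p\ge F_S(s_p)$ with VaR super-additivity without explicitly addressing the edge case you flagged.
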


\begin{proof}
For simplicity, define
\[
x_i(p) := \VaR_p[X_i], 
\qquad
t(p) := \sum_{i=1}^n x_i(p), 
\qquad
s(p) := \VaR_p[S].
\]

\textit{Step 1.} Since $\Phi$ is SD, evaluating it at $(x_1(p),\dots,x_n(p))$ yields
\[
\Phi\bigl(x_1(p),\dots,x_n(p)\bigr)
\;\ge\;
\Phi\bigl(t(p),\dots,t(p)\bigr),
\]
that is,
\[
\sum_{i=1}^n x_i(p)\log F_{X_i}(x_i(p))
\;\ge\;
t(p)\sum_{i=1}^n \log F_{X_i}(t(p)).
\]
By continuity of each $F_{X_i}$, we have $F_{X_i}(x_i(p))=p$, hence
\[
\sum_{i=1}^n x_i(p)\log p
\;\ge\;
t(p)\sum_{i=1}^n \log F_{X_i}(t(p)).
\]
Since $\sum_{i=1}^n x_i(p)=t(p)$ and $t(p)>0$, dividing by $t(p)$ gives
\[
\log p \;\ge\; \sum_{i=1}^n \log F_{X_i}(t(p)),
\]
and therefore
\[
p \;\ge\; \prod_{i=1}^n F_{X_i}(t(p)).
\]

\textit{Step 2.} Since $\XX$ is NSD, its joint distribution satisfies
\[
F_S(t) \;\le\; \prod_{i=1}^n F_{X_i}(t), 
\qquad \forall t\in[0,\infty).
\]
Evaluating at $t=t(p)$ yields
\[
F_S(t(p)) \;\le\; \prod_{i=1}^n F_{X_i}(t(p)).
\]
Combining with the previous inequality gives
\[
F_S(t(p)) \;\le\; p, 
\qquad \forall p\in(0,1).
\]

\textit{Step 3.} By definition of $s(p)=\VaR_p[S]$, we have $F_S(s(p))\ge p$. Hence
\begin{equation}
\label{eq:NSDSD_chain_inequality}
F_S(s(p)) \;\ge\; p \;\ge\; F_S(t(p)), 
\qquad \forall p\in(0,1).
\end{equation}

Fix $p_0\in(0,1)$. If $F_S(s(p_0))>F_S(t(p_0))$, then by monotonicity of $F_S$ it follows that $s(p_0)>t(p_0)$.

Suppose instead that
\[
F_S(s(p_0)) = F_S(t(p_0)) = p_0.
\]
Then $F_S$ is constant on $[s(p_0),t(p_0)]$, and in particular $s(p_0)\le t(p_0)$. We show that equality must hold. Assume for contradiction that $s(p_0)<t(p_0)$, and set $\varepsilon := t(p_0)-s(p_0)>0$.

Since the quantile functions $x_i(p)$ are strictly increasing and left-continuous, hence so is $t(p)$. Therefore, there exists $q<p_0$ such that
\[
s(p_0) < t(q) < t(p_0).
\]
Because $F_S$ is constant on $[s(p_0),t(p_0)]$, this implies
\[
F_S(t(q)) = p_0.
\]
Applying \eqref{eq:NSDSD_chain_inequality} at level $q$ gives
\[
F_S(t(q)) \le q,
\]
hence $p_0 \le q$, contradicting $q<p_0$. Therefore, $s(p_0)=t(p_0)$.

Combining both cases yields $s(p)\ge t(p)$ for all $p\in(0,1)$, i.e.,
\[
\VaR_p[S] \;\ge\; \sum_{i=1}^n \VaR_p[X_i],
\]
which is VaR super-additivity.
\end{proof}
The strength of Theorem~\ref{thm:Varsuperadditivity} lies in its ability to encompass a broad class of dependence structures while permitting considerable flexibility in the choice of marginal distributions, which need not be identical.  

The following two propositions provide sufficient conditions for establishing the NSD and SD properties.
\begin{proposition}
\label{prop:dependenceNSD}
When $\XX$ is negative lower orthant dependent (NLOD) \citep{Block1982a,Joe1997}, that is
\begin{equation*}
F_{\XX}(x_1,\dots,x_n)\le \prod_{i=1}^n F_{X_i}(x_i),\ \forall (x_1,\dots,x_n)\in[0,\infty)^n,
\end{equation*}
then $\XX$ is NSD.
\end{proposition}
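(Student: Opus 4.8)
The plan is to exploit the one-sidedness of the margins to shrink the ``simplex'' event $\{S \le t\}$ down to the ``orthant'' event $\bigcap_i\{X_i \le t\}$, and then invoke the NLOD hypothesis along the diagonal. The whole proof is essentially a two-line containment argument followed by a direct appeal to the assumption.

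First I would observe that, because each $X_i$ is supported on $[0,\infty)$, the aggregate $S$ dominates every individual component on the relevant event. Fix $t \in [0,\infty)$. On the event $\{S \le t\}$ we have, for each $j$,
\[
X_j = S - \sum_{i \neq j} X_i \le S \le t,
\]
since the omitted terms $\sum_{i\neq j} X_i$ are almost surely nonnegative. Hence
\[
\{S \le t\} \subseteq \bigcap_{i=1}^n \{X_i \le t\},
\]
and taking probabilities gives the key containment
\[
F_S(t) = \mathbb{P}(S \le t) \le \mathbb{P}\!\left(\bigcap_{i=1}^n \{X_i \le t\}\right) = F_{\XX}(t,\dots,t).
\]

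Next I would apply the NLOD assumption at the diagonal point $(t,\dots,t) \in [0,\infty)^n$, which by definition yields
\[
F_{\XX}(t,\dots,t) \le \prod_{i=1}^n F_{X_i}(t).
\]
Chaining the two inequalities produces $F_S(t) \le \prod_{i=1}^n F_{X_i}(t)$ for every $t \in [0,\infty)$, which is precisely the NSD property.

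The argument is immediate once the event containment is in place; the only point genuinely requiring care is the reduction from the simplex event to the orthant event, which relies crucially on the standing assumption that every $X_i$ has lower endpoint zero. Without nonnegativity of the components the inclusion $\{S\le t\}\subseteq\bigcap_i\{X_i\le t\}$ fails, so I would flag this as the substantive (if short) step of the proof; everything else is monotonicity of measure followed by a direct substitution of the NLOD inequality at the diagonal.
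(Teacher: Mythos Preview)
Your proof is correct and essentially identical to the paper's: both argue $F_S(t)\le F_{\XX}(t,\dots,t)\le \prod_i F_{X_i}(t)$, with the first inequality coming from the inclusion $\{S\le t\}\subseteq\bigcap_i\{X_i\le t\}$ (the paper phrases this geometrically as the $n$-simplex sitting inside the $n$-cube) and the second from NLOD at the diagonal. The paper also makes the same remark you do, that only NLOD along the diagonal is actually needed.
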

\begin{proof}
The result can be easily deduced since the $n$-simplex lies inside the $n$-cube (as a corner of the $n$-cube) which gives
\begin{equation*}
F_S(t)\le F_{\XX}(t,\dots,t)\le \prod_{i=1}^n F_{X_i}(t),\ \forall t\in[0,\infty).
\end{equation*}

In fact, to be NSD, $\XX$ need only be NLOD along the diagonal $(t,\dots,t),\ t\in[0,\infty)$, and not necessarily everywhere.
\end{proof}
\begin{proposition}
\label{prop:Phimargins} 
If $\Phi$ is non-increasing in the sense that 
\[
\Phi(x_1,\dots,x_n)\ge \Phi(y_1,\dots,y_n)
\quad\text{whenever } x_i\le y_i \text{ for all } i\in\{1,\dots,n\},
\]
then $\Phi$ is SD. In particular, if $\Phi$ can be written as
\[
\Phi(x_1,\dots,x_n)=\sum_{i=1}^n\phi_i(x_i),
\]
then $\Phi$ is non-increasing if and only if each $\phi_i$ is non-increasing. Consequently, if all $\phi_i$ are non-increasing then $\Phi$ is SD.
\end{proposition}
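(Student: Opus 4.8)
The plan is to establish the three assertions in sequence, the central one being that coordinate-wise monotonicity forces the SD inequality. The whole argument hinges on a single elementary observation about the domain: on $[0,\infty)^n$, every coordinate satisfies $x_i \le s := \sum_{j=1}^n x_j$, because all the $x_j$ are non-negative. Hence the diagonal point $(s,\dots,s)$ dominates $(x_1,\dots,x_n)$ coordinate-wise, i.e. $x_i \le s$ for each $i$. Applying the non-increasing hypothesis with the choice $y_i = s$ for all $i$ then yields immediately
\[
\Phi(x_1,\dots,x_n)\ge \Phi(s,\dots,s),
\]
which is exactly the defining inequality of simplex dominance. This proves the first claim, and it makes transparent why the restriction to the non-negative orthant is essential: it is precisely what guarantees $s \ge x_i$ and places the diagonal point above the original point in the partial order.

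For the additive case $\Phi(x_1,\dots,x_n)=\sum_{i=1}^n \phi_i(x_i)$, I would prove the stated equivalence in both directions. The ``if'' direction is immediate: if each $\phi_i$ is non-increasing and $x_i \le y_i$ for all $i$, then $\phi_i(x_i) \ge \phi_i(y_i)$ for each $i$, and summing gives $\Phi(x_1,\dots,x_n) \ge \Phi(y_1,\dots,y_n)$. For the ``only if'' direction I would use a single-coordinate perturbation: fix an index $i$, hold $x_j = y_j$ for every $j \ne i$, and let $x_i \le y_i$. The monotonicity of $\Phi$ then reads $\Phi \ge \Phi$ with all terms $\phi_j(x_j) = \phi_j(y_j)$ for $j\ne i$ cancelling, leaving $\phi_i(x_i) \ge \phi_i(y_i)$, so $\phi_i$ is non-increasing. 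The final ``consequently'' assertion follows by chaining this equivalence with the first claim.

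Honestly, there is no genuine obstacle here: the proof is elementary throughout, and the only step worth flagging is the domain observation $x_i \le s$, which is doing all the real work. I would emphasise in the write-up that this is where non-negativity of the support enters, and note in passing that the implication would fail on a domain admitting negative coordinates, where $s$ need no longer dominate each $x_i$. The additive-decomposition portion is purely a bookkeeping exercise in cancellation, requiring no regularity (continuity, differentiability) on the $\phi_i$.
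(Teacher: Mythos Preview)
Your argument is correct and matches the paper's proof essentially step for step: the paper likewise sets $y_i=s$ and invokes $x_i\le s$ for the SD claim, and for the additive decomposition it proves the ``if'' direction by summing and the ``only if'' direction by freezing all coordinates but one and cancelling. The only addition in your write-up is the remark on why non-negativity of the domain is essential, which is a worthwhile observation but not a departure in method.
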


\begin{proof}
First part. Fix $(x_1,\dots,x_n)\in[0,\infty)^n$ and set 
\[
y_1=\dots = y_n = t := \sum_{i=1}^n x_i.
\]

Since $x_i\le t$ for all $i$, the non-increasing property implies
\[
\Phi(x_1,\dots,x_n)\ge \Phi(y_1,\dots,y_n)=\Phi(t,\dots,t),
\]
and therefore $\Phi$ is SD.

Second part. Assume $\Phi$ can be written as 
\[
\Phi(x_1,\dots,x_n)=\sum_{i=1}^n\phi_i(x_i).
\]
If every $\phi$ is non-increasing then the sum of non-increasing functions is non-increasing i.e. $\Phi$ is non-increasing. 

Conversely, suppose that $\Phi$ is non-increasing. For each $i\in\{1,\dots,n\}$, take $x_i\le y_i$ and set $x_j=y_j=z$, $\forall j\neq i$, then 
\[
\Phi(x_1,\dots,x_n)=\phi_i(x_i)+\sum_{j\neq i}\phi_i(z)\ge \Phi(y_1,\dots,y_n)=\phi_i(y_i)+\sum_{j\neq i}\phi_i(z),
\]
\[
\implies \phi_i(x_i)\ge \phi_i(y_i),
\]
i.e. each $\phi_i$ is non-increasing. Consequently, by the first part of the proof, $\Phi$ is SD whenever all $\phi_i$ are non-increasing.
\end{proof}
\begin{corollary}
\label{coro:NLOD_PhiNI}
If $\XX$ is NLOD with continuous $F_{X_i}$, and each $\phi_i(x_i)=x_i\log F_{X_i}(x_i)$ in Equation \eqref{eq:Phiexpression} is non-increasing, then $\XX$ is VaR super-additive.
\end{corollary}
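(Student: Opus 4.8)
The plan is to obtain the corollary as a direct composition of the two preceding propositions with Theorem~\ref{thm:Varsuperadditivity}; no new machinery should be required, so the work is entirely one of checking that each hypothesis of that theorem is discharged by the stated assumptions. First I would record the simple observation that the function in \eqref{eq:Phiexpression} is an admissible input for the SD definition, i.e. that it maps $[0,\infty)^n$ into $(-\infty,0]$: since each marginal CDF satisfies $F_{X_i}(x_i)\in[0,1]$, we have $\log F_{X_i}(x_i)\le 0$, and multiplying by $x_i\ge 0$ gives $\phi_i(x_i)=x_i\log F_{X_i}(x_i)\le 0$, whence $\Phi=\sum_{i=1}^n\phi_i\le 0$ throughout the domain.

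Next I would dispatch the dependence hypothesis. By assumption $\XX$ is NLOD, so Proposition~\ref{prop:dependenceNSD} immediately yields that $\XX$ is NSD; indeed only the diagonal inequality $F_S(t)\le\prod_{i=1}^n F_{X_i}(t)$ is needed, which that proposition already supplies. Then I would handle the marginal hypothesis: the function in \eqref{eq:Phiexpression} is of the additive form $\Phi(x_1,\dots,x_n)=\sum_{i=1}^n\phi_i(x_i)$ with $\phi_i(x_i)=x_i\log F_{X_i}(x_i)$, so since each $\phi_i$ is assumed non-increasing, the equivalence in Proposition~\ref{prop:Phimargins} shows $\Phi$ is non-increasing, and hence SD.

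Finally, with $\XX$ NSD, the margins $F_{X_i}$ continuous, and $\Phi$ SD, all hypotheses of Theorem~\ref{thm:Varsuperadditivity} are met, so $\XX$ is VaR super-additive, completing the argument. I do not anticipate a genuine obstacle here, since the substance is carried entirely by the earlier results; the corollary is essentially a packaging of NLOD $\Rightarrow$ NSD together with the additive-monotonicity route to SD. The only point demanding a moment's care is the boundary behaviour of $\phi_i$ at arguments where $F_{X_i}(x_i)=0$ (for instance at the lower endpoint $x_i=0$), where $x_i\log F_{X_i}(x_i)$ is an indeterminate $0\cdot(-\infty)$; this should be read via the convention $\phi_i(0)=0$, and since it does not disturb monotonicity on the relevant range it is harmless to the conclusion.
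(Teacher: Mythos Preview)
Your proposal is correct and follows exactly the route the paper intends: the corollary is stated without its own proof precisely because it is the immediate composition of Proposition~\ref{prop:dependenceNSD} (NLOD $\Rightarrow$ NSD), Proposition~\ref{prop:Phimargins} (each $\phi_i$ non-increasing $\Rightarrow$ $\Phi$ SD), and Theorem~\ref{thm:Varsuperadditivity}. Your additional remarks on the range of $\Phi$ and the $0\cdot(-\infty)$ boundary convention at $x_i=0$ are more explicit than anything the paper says, but they do not alter the argument.
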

\begin{remark}
If we define a random vector $\YY=(Y_1,\dots,Y_n)$ via $Y_i=\theta_i X_i$, where $\theta_i\in(0,1)$, $\sum_{i=1}^n\theta_i=1$, and $X_1,\dots,X_n$ are identically distributed NLOD variables (consequently, $\YY$ is NLOD). VaR super-additivity of $\YY$ is therefore equivalent to the stochastic dominance relation in \eqref{eq:stochdominance}. 

Setting $y_i=\theta_i t$ with $t=\sum_{i=1}^n y_i,\,\theta_i=y_i/t$, the SD property of the function $\Phi$ in Equation \eqref{eq:Phiexpression} for $\YY$ becomes equivalent to
\[
F_X(t)\ge \prod_{i=1}^n F_{X}\left(\frac{t}{\theta_i}\right),\quad\forall t\in[0,\infty),\,\mathrm{and}\,\,\theta_i\in(0,1)\,\,\mathrm{with}\,\sum_{i=1}^n\theta_i=1,
\]
where $X$ is a random variable identically distributed to $X_i$.

This condition is exactly the sub-additivity property of the function $\xi(x)=-\log F_X(1/x)$. Consequently, together with the fact that NLOD is a special case of NSD as established in Proposition \ref{prop:dependenceNSD}, it follows that $X$ corresponds to the class studied in \citet{Chen2026}.
\end{remark}

The NSD property captures the dependence requirement on the joint distribution of \(\XX\) that ensures VaR super-additivity. We note, in passing, that the dependence structure used in part (2) of Example~\ref{example:VaRfailssuperadditivity} is weaker than NSD, specifically it fails the NSD property at $t\in\left(\dfrac{7}{ 10},1+\sqrt{2}\right)$. This contributed, though was not strictly required, to the failure of VaR super-additivity in that example. Nonetheless, by definition, NSD is a relatively weak form of negative dependence and is strictly implied by NLOD.

Below we provide an example of a VaR super-additive random vector \(\XX\) that is NSD but not NLOD.
\begin{example}
\label{example:NSDnotNLOD}
Consider the random vector 
\[
\XX=\left(X,\,X,\,\frac{1}{X}\right),
\]
where $X$ follows a unit-scale, unit-shape Pareto II distribution. Its joint distribution function is
\[
F_{\XX}(x_1,x_2,x_3)=
\begin{cases}
0, & \displaystyle \frac{1}{x_3}\ge \min\{x_1,x_2\},\\[6pt]
\displaystyle \frac{x_3}{1+x_3}-\frac{1}{1+\min\{x_1,x_2\}}, 
& \displaystyle \frac{1}{x_3}<\min\{x_1,x_2\}.
\end{cases}
\]

The distribution of the sum $S=X+X+1/X$ may be computed explicitly:
\[
F_S(s)=
\begin{cases}
0, & s\le 2\sqrt{2}, \\[4pt]
\displaystyle \frac{\sqrt{s^2-8}}{\,s+3\,}, & s>2\sqrt{2}.
\end{cases}
\]

Each marginal distribution is identical:
\[
F_X(x)=F_{1/X}(x)=1-\frac{1}{1+x},\qquad x\ge 0.
\]

To verify that $\XX$ is NSD, observe first that for $0\le t\le 2\sqrt{2}$,
\[
F_S(t)=0 \le \left(\frac{t}{1+t}\right)^3 = F_X(t)^3.
\]

For $t>2\sqrt{2}$, one checks analytically that
\[
F_S(t) = \frac{\sqrt{t^2-8}}{t+3}
\;<\; \left(\frac{t}{1+t}\right)^3 
= F_X(t)^3.
\]

Thus $F_S(t) \le F_X(t)^3$ for all $t\ge 0$, proving that $\XX$ is NSD.

Next we show that $\XX$ is not NLOD. For $t>1$,
\[
F_{\XX}(t,t,t)-F_X(t)^3
= \frac{t-1}{1+t}-\frac{t^3}{(1+t)^3}
= \frac{t^2-t-1}{(1+t)^3}.
\]

A simple algebraic check shows that $t^2-t-1\ge 0$ whenever 
\[
t \ge \frac{\sqrt{5}+1}{2}.
\]

Hence $F_{\XX}(t,t,t)\ge F_X(t)^3$ for all such $t$, implying that $\XX$ fails to be NLOD, even along the diagonal.

We now compare the associated VaRs. The marginal VaRs are
\[
\VaR_p[X]=\VaR_p\left[\dfrac{1}{X}\right]=\frac{p}{1-p},
\]
whereas for the sum we have
\[
\VaR_p[S]=\frac{3p^2+\sqrt{p^2+8}}{\,1-p^2\,}.
\]

A direct algebraic comparison yields
\[
\VaR_p[S]
= \frac{3p^2+\sqrt{p^2+8}}{1-p^2}
\;>\;
\frac{3p}{1-p}=3\,\VaR_p[X],\quad \forall p\in(0,1),
\]
so $\XX$ is VaR super-additive.

This conclusion is an immediate consequence of Theorem~\ref{thm:Varsuperadditivity}: we have already shown that $\XX$ is NSD, and for the chosen unit-shape Pareto margins the functions $\phi_i(x_i)=x_i\log F_{X_i}(x_i)$ in Equation~\eqref{eq:Phiexpression} are non-increasing (as will be demonstrated in Example~\ref{example:Phi-non-increasing-1}) and consequently SD by Proposition \ref{prop:Phimargins}.
\end{example}

The second part of Theorem~\ref{thm:Varsuperadditivity} imposes structural conditions on the marginal distributions by specifying the behaviour of the function $\Phi$ in Equation~\eqref{eq:Phiexpression}.  
In practice, the SD property may not be straightforward to verify, so it is useful to rely on the non-increasing criteria. Applying the condition of Proposition~\ref{prop:Phimargins} to the function $\Phi$ in Equation~\eqref{eq:Phiexpression}, i.e.
\[
\Phi(x_1,\dots,x_n)=\sum_{i=1}^n \phi_i(x_i),\quad\mathrm{where}\quad \phi_i(x_i)=x_i\log F_{X_i}(x_i),
\]
it suffices to verify that each $\phi_i(x_i)$ is non-increasing on $[0,\infty)$. This is convenient, as it reduces the verification of SD to checking each margin separately. The next example lists several standard continuous marginal distributions $F_{X_i}$	for which the function $\phi_i$ indeed has the required monotonicity property.

\begin{example}
\label{example:Phi-non-increasing-1}
We collect below several familiar continuous distributions that are widely used to model claim severities in insurance whose associated functions $\phi_i$ are non-increasing.
\begin{itemize}

%-------------------------- 1 Frechet ------------------------------
\item[(1)] \textbf{Fr\'echet distribution.}  
The CDF is
\[
F_{X_i}(x_i)=\exp\left(-\left(\dfrac{x_i}{\theta_i}\right)^{-\alpha_i}\right),
\qquad x_i\ge0,\ \alpha_i,\theta_i>0,
\]
which yields
\begin{align*}
\phi_i(x_i)
&=-x_i\left(\frac{x_i}{\theta_i}\right)^{-\alpha_i}
\\
&=-\theta_i^{\alpha_i} x_i^{\,1-\alpha_i}.
\end{align*}
This function is non-increasing precisely when $0<\alpha_i\le 1$.

%-------------------------- 2 Lomax ------------------------------
\item[(2)] \textbf{Pareto(II)/Lomax distribution.}  
Here
\[
F_{X_i}(x_i)=1-\left(\dfrac{\theta_i}{\theta_i+x_i}\right)^{\alpha_i},
\qquad x_i\ge0,\ \alpha_i,\theta_i>0,
\]
and thus
\[
\phi_i(x_i)=x_i\log\!\left(1-\left(\dfrac{\theta_i}{\theta_i+x_i}\right)^{\alpha_i}\right).
\]
The derivative becomes
\begin{align*}
\phi_i^{'}(x_i)
&=\log F_{X_i}(x_i)+\dfrac{x_i f_{X_i}(x_i)}{F_{X_i}(x_i)}
\\
&=\log F_{X_i}(x_i)+\dfrac{\alpha_i x_i\overline{F}_{X_i}(x_i)}{(\theta_i+x_i)F_{X_i}(x_i)}.
\end{align*}

\textbf{Claim.}  $\phi_i^{'}(x_i)\le 0$ for all $x_i\in[0,\infty)$ if and only if $0<\alpha_i\le 1$.

\smallskip
\emph{Necessity.}  
Assume $\phi_i^{'}(x_i)\le 0$ on $[0,\infty)$, and suppose $\alpha_i>1$.  
Consider 
\[
\lim_{x_i\to\infty}
\dfrac{\phi_i^{'}(x_i)}{\overline{F}_{X_i}(x_i)}
=\alpha_i-1.
\]
Since $\alpha_i>1$, the ratio becomes positive for sufficiently large $x_i$, contradicting the non-positivity of $\phi_i^{'}$.  
Thus, necessarily $0<\alpha_i\le 1$.

\smallskip
\emph{Sufficiency.}  
Assume $0<\alpha_i\le 1$.  
Rewrite the derivative as
\begin{align*}
\phi_i^{'}(x_i)
&=-\int_{F_{X_i}(x_i)}^1\dfrac{1}{w}\,\mathrm{d}w
+\dfrac{\alpha_i x_i}{(\theta_i+x_i)F_{X_i}(x_i)}
\int_{F_{X_i}(x_i)}^1\mathrm{d}w
\\
&=-\int_{F_{X_i}(x_i)}^1
\dfrac{(\theta_i+x_i)F_{X_i}(x_i)-\alpha_i x_i w}
{w(\theta_i+x_i)F_{X_i}(x_i)}
\,\mathrm{d}w.
\end{align*}
Since $F_{X_i}(x_i)\le w\le 1$, a sufficient condition for the integrand to be non-negative is 
\[
(\theta_i+x_i)F_{X_i}(x_i)-\alpha_i x_i\ge 0.
\]
Applying the mean value theorem to $t\mapsto t^{\alpha_i}$ on  
$\left[\dfrac{\theta_i}{\theta_i+x_i},1\right]$ yields the inequality and thus the desired non-positivity.  

Therefore, $\phi_i$ is non-increasing if and only if $0<\alpha_i\le 1$.

%-------------------------- 3 Levy ------------------------------
\item[(3)] \textbf{L\'evy distribution.}  
With
\[
F_{X_i}(x_i)=\mathrm{erfc}\!\left(\sqrt{\frac{\theta_i}{2x_i}}\right),\qquad x_i\ge0,\ \theta_i>0,
\]
define 
\[
\phi_i(x_i)=x_i\log\!\left(\mathrm{erfc}\!\left(\sqrt{\frac{\theta_i}{2x_i}}\right)\right).
\]
Differentiating gives
\[
\phi_i'(x_i)
=\log\!\left(\mathrm{erfc}\!\left(\sqrt{\frac{\theta_i}{2x_i}}\right)\right)
+\frac{\sqrt{\dfrac{\theta_i}{2\pi x_i}}
\exp\!\left(-\dfrac{\theta_i}{2x_i}\right)}
{\mathrm{erfc}\!\left(\sqrt{\dfrac{\theta_i}{2x_i}}\right)}.
\]

Let $t=\sqrt{\dfrac{\theta_i}{2x_i}}$.  
Then $\phi_i'(x_i)\le 0$ is equivalent to $\psi_i(t)\le 0$, where
\[
\psi_i(t)=\log(\mathrm{erfc}(t))
+\frac{t\exp(-t^2)}{\sqrt{\pi}\,\mathrm{erfc}(t)}.
\]
Since
\[
\lim_{t\to 0^+}\psi_i(t)=0,
\qquad \lim_{t\to\infty}\psi_i(t)=-\infty,
\]
it suffices to show $\psi_i'(t)\le 0$.  
Differentiation leads to
\[
\psi_i'(t)
=\frac{\exp(-2t^2)
\left(2t-\sqrt{\pi}\,\exp(t^2)(2t^2+1)\mathrm{erfc}(t)\right)}
{\pi\,\mathrm{erfc}(t)^2},
\]
which is non-positive whenever
\[
\frac{2t\exp(-t^2)}{\sqrt{\pi}(2t^2+1)}
\le \mathrm{erfc}(t),
\]
the classical Mills ratio bound \citep{Mills1926}.  
Thus $\phi_i'(x_i)\le 0$ for all $x_i\ge0$ i.e. $\phi_i$ is non-increasing on $[0,\infty)$.

%-------------------------- 4 Beta prime ------------------------------
\item[(4)] \textbf{One-parameter Beta Prime distribution.}  
With
\[
F_{X_i}(x_i)=\left(\dfrac{x_i}{1+x_i}\right)^{\alpha_i},\qquad x_i\ge0,\ \alpha_i>0,
\]
we have
\[
\phi_i(x_i)=\alpha_i x_i\log\!\left(\dfrac{x_i}{1+x_i}\right).
\]
Differentiation gives
\begin{align*}
\phi_i'(x_i)
&=\alpha_i\left(\log\!\left(\dfrac{x_i}{1+x_i}\right)
+\dfrac{1}{1+x_i}\right)
\\
&=\alpha_i\left(
-\sum_{k=1}^\infty\frac{1}{k}\left(\frac{1}{1+x_i}\right)^k
+\dfrac{1}{1+x_i}
\right)
\\
&\le \alpha_i\left(
-\frac{1}{1+x_i}+\dfrac{1}{1+x_i}
\right)=0.
\end{align*}
Hence $\phi_i$ is non-increasing for all $\alpha_i>0$.

%-------------------------- 5 Log-hazard ------------------------------
\item[(5)] \textbf{Log-Power distribution.}  
If
\[
F_{X_i}(x_i)
=\exp\!\left(-\dfrac{\log(1+x_i)^{\alpha_i}}{x_i}\right),
\qquad x_i\ge0,\ \alpha_i\in(-\infty,1),
\]
then
\[
\phi_i(x_i)=-\log(1+x_i)^{\alpha_i}.
\]
This is non-increasing whenever $\log(1+x_i)^{\alpha_i}$ is non-decreasing, which occurs exactly when $0\le\alpha_i<1$.

%-------------------------- 6 Log-Cauchy ------------------------------
\item[(6)] \textbf{Log-Cauchy distribution.}  
The CDF is
\[
F_{X_i}(x_i)=\dfrac{1}{2}+\dfrac{1}{\pi}\arctan(\alpha_i\log(x_i)),
\qquad x_i\ge 0,\ \alpha_i>0.
\]
Hence
\[
\phi_i(x_i)
=x_i\log\!\left(\dfrac{1}{2}
+\dfrac{1}{\pi}\arctan(\alpha_i\log(x_i))\right).
\]
Differentiation yields
\[
\phi_i'(x_i)
=\log F_{X_i}(x_i)
+\dfrac{\alpha_i}{\pi(1+(\alpha_i\log(x_i))^2)\,F_{X_i}(x_i)}.
\]
Introducing $\theta=\arctan(\alpha_i\log(x_i))$ gives
\[
\phi_i'(x_i)
=\log F_{X_i}(x_i)
+\dfrac{\alpha_i \cos^2\theta}{\pi F_{X_i}(x_i)}
=\log F_{X_i}(x_i)
+\dfrac{\alpha_i\sin^2(\pi F_{X_i}(x_i))}{\pi F_{X_i}(x_i)}.
\]

To test non-positivity, define
\[
\psi_i(u)=\log(u)
+\dfrac{\alpha_i\sin^2(\pi u)}{\pi u},
\qquad u\in[0,1].
\]
Then $\phi_i'(x_i)\le 0$ holds for all $x_i\ge0$ precisely when $\psi_i(u)\le 0$, $\forall u\in[0,1]$ or equivalently when
\[
0<\alpha_i\le \inf_{u\in[0,1]}
\dfrac{-\pi u\log(u)}{\sin^2(\pi u)}
\approx 1.0568.
\]
Thus, $\phi_i$ is non-increasing on $[0,\infty)$ if and only if $0<\alpha_i\le 1.0568$.
%-------------------------- 7 Inverse gamma ------------------------------
\item[(7)] \textbf{Inverse-Gamma distribution.}  
The CDF can be written as
\[
F_{X_i}(x_i)
=\frac{1}{\Gamma(\alpha_i)}\Gamma\!\left(\alpha_i,\dfrac{\theta_i}{x_i}\right),
\qquad x_i\ge0,\ \alpha_i,\theta_i>0,
\]
leading to
\[
\phi_i(x_i)
=x_i\log\!\left(\frac{1}{\Gamma(\alpha_i)}
\Gamma(\alpha_i,\frac{\theta_i}{x_i})\right).
\]
Differentiation gives
\[
\phi_i'(x_i)
=\log F_{X_i}(x_i)
+\dfrac{\left(\dfrac{\theta_i}{x_i}\right)^{\alpha_i}
\exp(-\theta_i/x_i)}
{\Gamma(\alpha_i)\,F_{X_i}(x_i)}.
\]

Let $t=\theta_i/x_i$.  
Then $\phi_i'(x_i)\le 0$ is equivalent to $\psi_i(t)\le 0$, where
\[
\psi_i(t)
=\log\!\left(\frac{1}{\Gamma(\alpha_i)}\Gamma(\alpha_i,t)\right)
+\dfrac{t^{\alpha_i}\exp(-t)}{\Gamma(\alpha_i,t)}.
\]

\textbf{Claim.}  $\psi_i(t)\le 0$ for all $t\ge0$ if and only if $0<\alpha_i\le 1$.

\smallskip
\emph{Necessity.}  
Limits give
\[
\lim_{t\to0^+}\psi_i(t)=0,\qquad
\lim_{t\to\infty}\psi_i(t)=
\begin{cases}
-\infty, & 0<\alpha_i<1,\\
0, & \alpha_i=1,\\
+\infty, & \alpha_i>1,
\end{cases}
\]
so non-positivity requires $0<\alpha_i\le1$.

\smallskip
\emph{Sufficiency.}  
If $0<\alpha_i\le1$, then
\[
\psi_i'(t)
=\frac{
t^{\alpha_i-1}\exp(-2t)
\bigl(t^{\alpha_i}-\exp(t)(t+1-\alpha_i)\Gamma(\alpha_i,t)\bigr)
}
{\Gamma(\alpha_i,t)^2},
\]
which is non-positive whenever
\[
\frac{t^{\alpha_i}\exp(-t)}{t+1-\alpha_i}
\le \Gamma(\alpha_i,t),
\]
a Gautschi-type lower bound \citep{Gautschi1959}.  
Thus $\psi_i$ is non-increasing with $\psi_i(0)=0$, proving $\psi_i(t)\le 0$ on $[0,\infty)$.

\smallskip
Therefore, $\phi_i$ is non-increasing on $[0,\infty)$ if and only if $0<\alpha_i\le 1$.

\end{itemize}
\end{example}

A shared characteristic across the distributions presented in Example \ref{example:Phi-non-increasing-1} is their non-integrability. This property is, in fact, a necessary consequence whenever $\phi_i$ is non-increasing.
\begin{proposition}
\label{prop:phi_mean_infinite}
If $\phi_i(x_i) = x_i \log F_{X_i}(x_i)$ is non-increasing on $[0,\infty)$, then $\E[X_i]=\infty$.
\end{proposition}
\begin{proof}
Assume that $\phi_i$ is non-increasing and argue by contradiction. 
Suppose that $\E[X_i] < \infty$.

Since $F_{X_i}(x_i) \le 1$, we have $\log F_{X_i}(x) \le 0$ and hence 
$\phi_i(x_i) \le 0$ for all $x_i\in[0,\infty)$. Moreover, because $\phi_i$ is 
non-increasing, it starts at some value $\phi_i(0) \in (-\infty,0]$ 
and can only decrease (or remain constant) as $x_i$ increases.

As $\E[X_i] < \infty$, the standard tail characterization of integrability yields
\[
x_i\,\overline{F}_{X_i}(x_i) \longrightarrow 0
\qquad \text{as } x_i \to \infty.
\]

We rewrite
\[
\phi_i(x_i) 
= x_i \log\!\big(1-\overline{F}_{X_i}(x_i)\big).
\]

Using the expansion $\log(1-u) = -\sum_{k=1}^{\infty} \frac{u^k}{k}$ for 
$u \in (0,1)$, we obtain
\[
\phi_i(x_i)
= -\sum_{k=1}^{\infty} \frac{x_i\,\overline{F}_{X_i}(x_i)^k}{k}.
\]

Since $x_i\,\overline{F}_{X_i}(x_i) \to 0$, each term in the series converges 
to zero, and in particular the dominant term $x_i\,\overline{F}_{X_i}(x_i)$ 
vanishes. Consequently,
\[
\phi_i(x_i) \longrightarrow 0
\qquad \text{as } x_i \to \infty.
\]

Thus, $\phi_i$ is a non-increasing function taking values in $(-\infty,0]$ and converging to $0$ as $x \to \infty$. This can occur 
only if $\phi_i$ increases somewhere, unless $\phi_i \equiv 0$, which 
would imply $F_{X_i} \equiv 1$. Both possibilities contradict the assumptions that $\phi_i$ is non-increasing and that $X_i$ is non-degenerate.

Therefore, $\E[X_i] = \infty$.
\end{proof}
By Proposition~\ref{prop:phi_mean_infinite}, a non-increasing $\phi_i$ forces
$X_i$ to have infinite mean and, in particular, unbounded support, underscoring
the restrictive nature of this condition. The following proposition characterizes
exactly which distributional properties of $X_i$ guarantee it, and provides a
readily verifiable sufficient condition.
\begin{proposition}
\label{prop:equivalenceNonIncreasing}
The following conditions are equivalent to the functions $\phi_i(x_i)=x_i\log F_{X_i}(x_i)$ being non-increasing on $[0,\infty)$.
\begin{itemize}
\item[(i)] Suppose that $F_{X_i}$ is differentiable then $\phi_i(x_i)$ is non-increasing for all $x_i\in[0,\infty)$ if and only if 
\[
x_i\, r_{X_i}(x_i)\le \int_{x_i}^\infty r_{X_i}(w)\,\mathrm{d}w,\quad \forall x_i\in[0,\infty).
\]
Where $r_{X_i}(x_i)=\dfrac{f_{X_i}(x_i)}{F_{X_i}(x_i)}$ is the reverse hazard rate function \citep{Block1998} of the random variable $X_i$. In particular, if $x_i^2\, r_{X_i}(x_i)$ is non-decreasing then $\phi_i(x_i)$ is non-increasing.
\item[(ii)] $\phi_i(x_i)$ is non-increasing for all $x_i\in[0,\infty)$ if and only if the function $G_i=\log\circ F_{X_i}$ satisfies the scale-shrinking property, that is for any $x_i\in[0,\infty)$:
\[
G_i(\lambda x_i)\le \dfrac{1}{\lambda}G(x_i),\ \quad \forall \lambda\in[1,\infty).
\]
\end{itemize}
\end{proposition}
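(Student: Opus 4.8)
The plan is to reduce both equivalences to one structural observation. Writing $G_i := \log\circ F_{X_i}$, which is finite and non-positive on $(0,\infty)$ because $0<F_{X_i}(x_i)\le 1$ there, we have $\phi_i(x_i)=x_i\,G_i(x_i)$, and the reverse hazard rate is exactly the logarithmic derivative of $F_{X_i}$, namely $h_{X_i}(x_i)=f_{X_i}(x_i)/F_{X_i}(x_i)=G_i'(x_i)$. With these identifications in hand, each part becomes a short manipulation, and the only real care is needed at the left endpoint $x_i=0$.

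For part (i) I would first use that differentiability of $F_{X_i}$ makes $\phi_i$ differentiable on $(0,\infty)$, so that $\phi_i$ is non-increasing if and only if $\phi_i'(x_i)\le 0$ everywhere. Differentiating the product gives
\[
\phi_i'(x_i)=G_i(x_i)+x_i\,h_{X_i}(x_i).
\]
The second step is to express $-G_i(x_i)$ as a tail integral of the reverse hazard rate: since $h_{X_i}=G_i'$ and $G_i(\infty)=\log F_{X_i}(\infty)=\log 1=0$, the fundamental theorem of calculus yields
\[
\int_{x_i}^\infty h_{X_i}(w)\,\mathrm{d}w=G_i(\infty)-G_i(x_i)=-G_i(x_i).
\]
Substituting this into the condition $\phi_i'(x_i)\le 0$ turns it into precisely the stated inequality $x_ih_{X_i}(x_i)\le \int_{x_i}^\infty h_{X_i}(w)\,\mathrm{d}w$. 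Since every step is an equivalence, both directions follow at once.

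For part (ii) I would instead unwrap the definition of monotonicity directly. The function $\phi_i$ is non-increasing exactly when $\phi_i(x_i)\ge\phi_i(y_i)$ for all $0\le x_i\le y_i$; writing any such pair with $x_i>0$ as $y_i=\lambda x_i$, $\lambda\in[1,\infty)$, and using $\phi_i(x_i)=x_iG_i(x_i)$ gives $x_iG_i(x_i)\ge \lambda x_i\,G_i(\lambda x_i)$. Cancelling the strictly positive factor $x_i$ leaves $G_i(x_i)\ge\lambda\,G_i(\lambda x_i)$, equivalently the scale-shrinking inequality $G_i(\lambda x_i)\le \frac{1}{\lambda}G_i(x_i)$; reversing the chain gives the converse.

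The main obstacle, and the step requiring the most care, is the boundary behaviour at $x_i=0$, where $F_{X_i}(0)=0$ forces $G_i(0)=-\infty$. I would dispose of it by carrying out the arguments above on $(0,\infty)$ and checking that the endpoint adds no binding constraint in either formulation: a pair $(0,y_i)$ in part (ii) yields $\phi_i(0)\ge\phi_i(y_i)$, which holds automatically because $\phi_i\le 0$ with $\phi_i(0^+)=0$, and correspondingly the scaling and integral inequalities degenerate but remain valid at $x_i=0$. A minor secondary point is the convergence of the tail integral in part (i), which is immediate from the identity above since $G_i(\infty)=0$ is finite.
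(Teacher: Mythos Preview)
Your proof is correct and follows the same route as the paper's: for (i) you compute $\phi_i'=G_i+x_ih_{X_i}$ and rewrite $-G_i(x_i)=\int_{x_i}^\infty h_{X_i}(w)\,\mathrm{d}w$, and for (ii) you set $y_i=\lambda x_i$ and cancel the positive factor $x_i$, exactly as the paper does. One minor slip in your boundary discussion: the claim $\phi_i(0^+)=0$ is not true in general (e.g.\ for a unit-shape Fr\'echet margin $\phi_i\equiv-\theta_i$), though this does not affect the main argument, and the paper itself simply works on $(0,\infty)$ without addressing the endpoint.
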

\begin{proof}
We will prove each claim separately.
\begin{itemize}
\item[(i)] Suppose $F_{X_i}$ is differentiable then:
\begin{align*}
\phi_i^{'}(x_i)&=\log F_{X_i}(x_i)+\dfrac{x_if_{X_i}(x_i)}{F_{X_i}(x_i)},
\\
&=-\int_{x_i}^\infty \dfrac{f_{X_i}(w)}{F_{X_i}(w)}\,\mathrm{d}w+\dfrac{x_if_{X_i}(x_i)}{F_{X_i}(x_i)},
\\
&=-\int_{x_i}^\infty r_{X_i}(w)\,\mathrm{d}w+x_i\,  r_{X_i}(x_i).
\end{align*}
That means $\phi_i^{'}(x_i)\le 0,\ \forall x_i\in[0,\infty)$, i.e. $\phi_i(x_i)$ is non-increasing for all $x_i\in[0,\infty)$, if and only if  
\[
x_i\, r_{X_i}(x_i)\le \int_{x_i}^\infty r_{X_i}(w)\,\mathrm{d}w,\quad \forall x_i\in[0,\infty).
\] 
Let us now show that the assumption of $x_i^2\, r_{X_i}(x_i)$ being non-decreasing implies $\phi_i'(x_i)\le 0$,
and hence that $\phi_i(x_i)$ is non-increasing.

Since $x_i^2\, r_{X_i}(x_i)$ is non-decreasing then for all $w\ge x_i$:
\[
w^2\, r_{X_i}(w)\ge x_i^2\, r_{X_i}(x_i)\quad\iff\quad r_{X_i}(w)\ge \dfrac{x_i^2\, r_{X_i}(x_i)}{w^2}.
\]
Integrating both sides of the inequality with respect to $w$ yields:
\[
\int_{x_i}^{\infty}r_{X_i}(w)\,\d w\ge \int_{x_i}^{\infty}\dfrac{x_i^2\, r_{X_i}(x_i)}{w^2}\,\d w=x_i^2\, r_{X_i}(x_i)\,\dfrac{1}{x_i}=x_i\, r_{X_i}(x_i),
\]
i.e. $\phi_i'(x_i)$ is non-positive for all $x_i$. Consequently, $\phi_i(x_i)$ is non-increasing, as
claimed.

\item[(ii)] Pick any $x_i\le y_i$ such that $y_i=\lambda x_i$, $\lambda\ge1$, then
\begin{align*}
y_iG_i(y_i)&\le x_iG_i(x_i),
\\
\iff \lambda x_iG_i(\lambda x_i)&\le x_iG_i(x_i),
\\
\iff G_i(\lambda x_i)&\le \dfrac{1}{\lambda}G_i(x_i).
\end{align*}
\end{itemize}
\end{proof}
\begin{remark}
\label{rm:phi_r_X_sufficicent}
The sufficient condition of non-decreasing $x_i^2\,r_{X_i}(x_i)$ in
Proposition~\ref{prop:equivalenceNonIncreasing}(i) is particularly convenient
in practice, as it allows one to verify the non-increasing property of $\phi_i$
directly from the reverse hazard rate, bypassing any direct analysis of $\phi_i$
itself. Many distributions beyond those in Example~\ref{example:Phi-non-increasing-1}
satisfy it. For instance, if $X_i$ follows a Log-logistic distribution with unit
scale and shape $\alpha_i > 0$, then
\[
    x_i^2\,r_{X_i}(x_i)
    = \frac{\alpha_i\, x_i}{1 + x_i^{\alpha_i}},
\]
which is non-decreasing if and only if $0 < \alpha_i \le 1$.

The condition also admits a natural interpretation via the inverted variable
$1/X_i$. Since
\[
    h_{1/X_i}(x_i)=\frac{f_{1/X_i}(x_i)}{\overline{F}_{1/X_i}(x_i)}=\frac{(1/x_i)^2\,f_{X_i}(1/x_i)}{F_{X_i}(1/x_i)}=(1/x_i)^2\,r_{X_i}(1/x_i),
\]
$h_{1/X_i}(x_i)$ is non-increasing if and only if $x_i^2\,r_{X_i}(x_i)$ is
non-decreasing, meaning the sufficient condition is equivalent to $1/X_i$
having a non-increasing hazard rate. This equivalence is not coincidental: the
non-increasing hazard rate of $1/X_i$ plays a dual role in establishing VaR
super-additivity. It governs the right-tail of $X_i$ -- as implied by
Proposition~\ref{prop:phi_mean_infinite} -- ensuring super-additivity for
$p \in (p^*, 1)$, while simultaneously controlling the body of $X_i$
(the tail of $1/X_i$), so that super-additivity extends to $p \in (0, p^*)$.
This interplay recurs throughout the literature: \citet{Chen2026} exploited
the sub-additivity of $-\log \overline{F}_{1/X}(x)$, while \citet{Arab2025}
worked with that of $F_{1/X}(x)$, both conditions being rooted in the same
structural regularity of $1/X_i$.
\end{remark}

Although the non-increasing property is tractable, it is stronger than what is required for VaR super-additivity. The next example shows that $\Phi$ may be SD without, or equivalently without each $\phi_i$, being non-increasing.
\begin{example}
Let $\XX=(X_1,X_2,X_3)$ be an independent random vector (a special case of NSD).  
Assume that $X_1$ and $X_2$ are Fr\'echet distributed with unit scales and shape parameters  
$\alpha_1=\alpha_2=\dfrac12$, while $X_3$ has a piecewise CDF composed of a power-law part followed by a Fr\'echet CDF with $\theta_3=1$ and $\alpha_3=\dfrac12$. Explicitly,
\begin{align*}
F_{X_1}(x)=F_{X_2}(x)&=\exp\!\left(-\frac{1}{\sqrt{x}}\right),\\[2mm]
F_{X_3}(x_3)&=
\begin{cases}
\dfrac{x_3^2}{e}, & 0\le x_3\le 1,\\[2mm]
\exp\!\left(-\dfrac{1}{\sqrt{x_3}}\right), & x_3>1.
\end{cases}
\end{align*}

The corresponding $\phi_i$-functions (as defined in Equation~\eqref{eq:Phiexpression}) are
\begin{align*}
\phi_1(x_1)&=-\sqrt{x_1},\\
\phi_2(x_2)&=-\sqrt{x_2},\\
\phi_3(x_3)&=
\begin{cases}
x_3\!\left(2\log x_3 -1\right), & 0\le x_3\le 1,\\[1.5mm]
-\sqrt{x_3}, & x_3>1.
\end{cases}
\end{align*}

It is clear that $\phi_1$ and $\phi_2$ are non-increasing, whereas $\phi_3$ fails to be non-increasing on the interval $x_3\in\left[\dfrac{1}{\sqrt{e}},1\right]$. Figure~\ref{fig:phi-nonSD} shows the graphs of these functions.
\begin{figure}[htb!]
\centering
\begin{subfigure}{0.5\textwidth}
    \centering
    \includegraphics[scale=0.7]{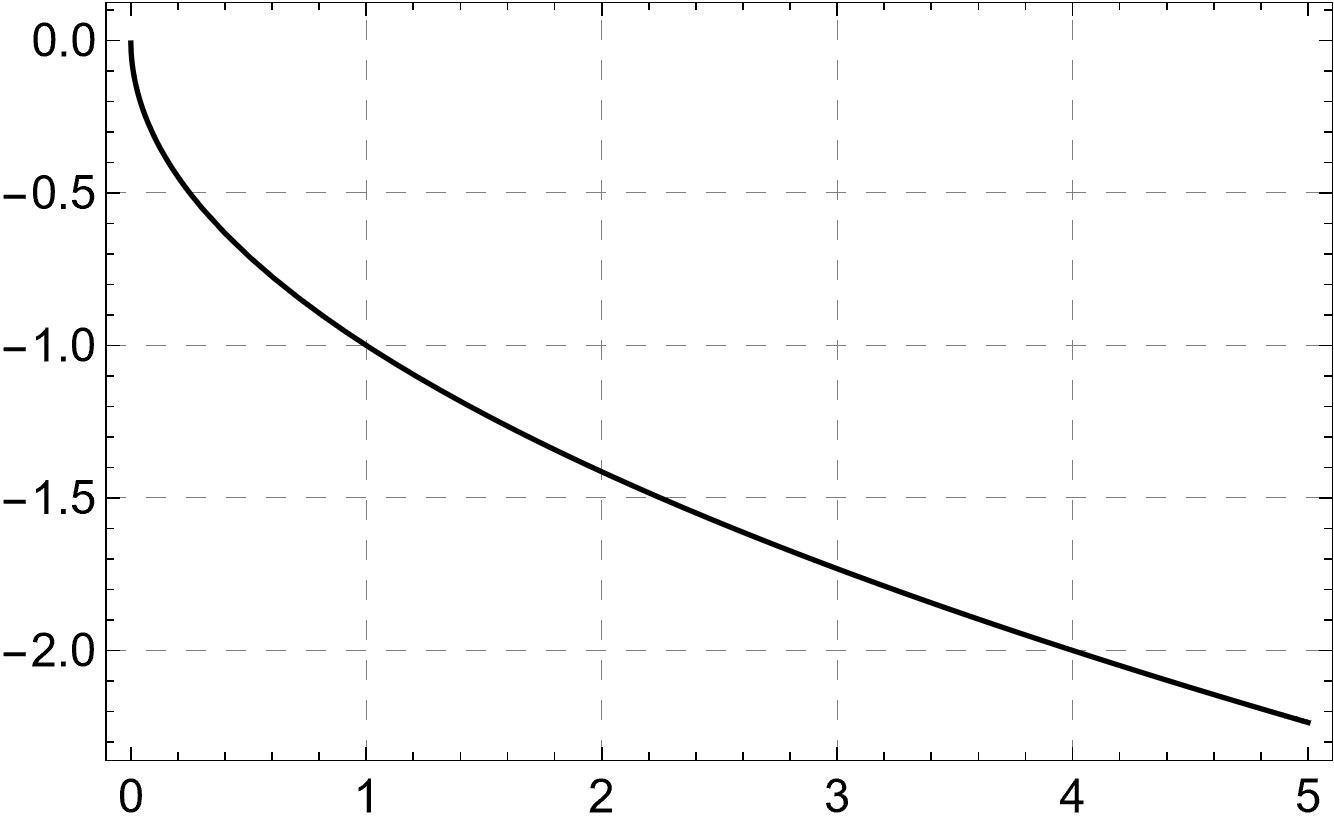}
    \caption{$\phi_1$ and $\phi_2$}
    \label{subfig:phi12nonSD}
\end{subfigure}
\hfill
\\~\\
\begin{subfigure}{0.5\textwidth}
    \centering
    \includegraphics[scale=0.7]{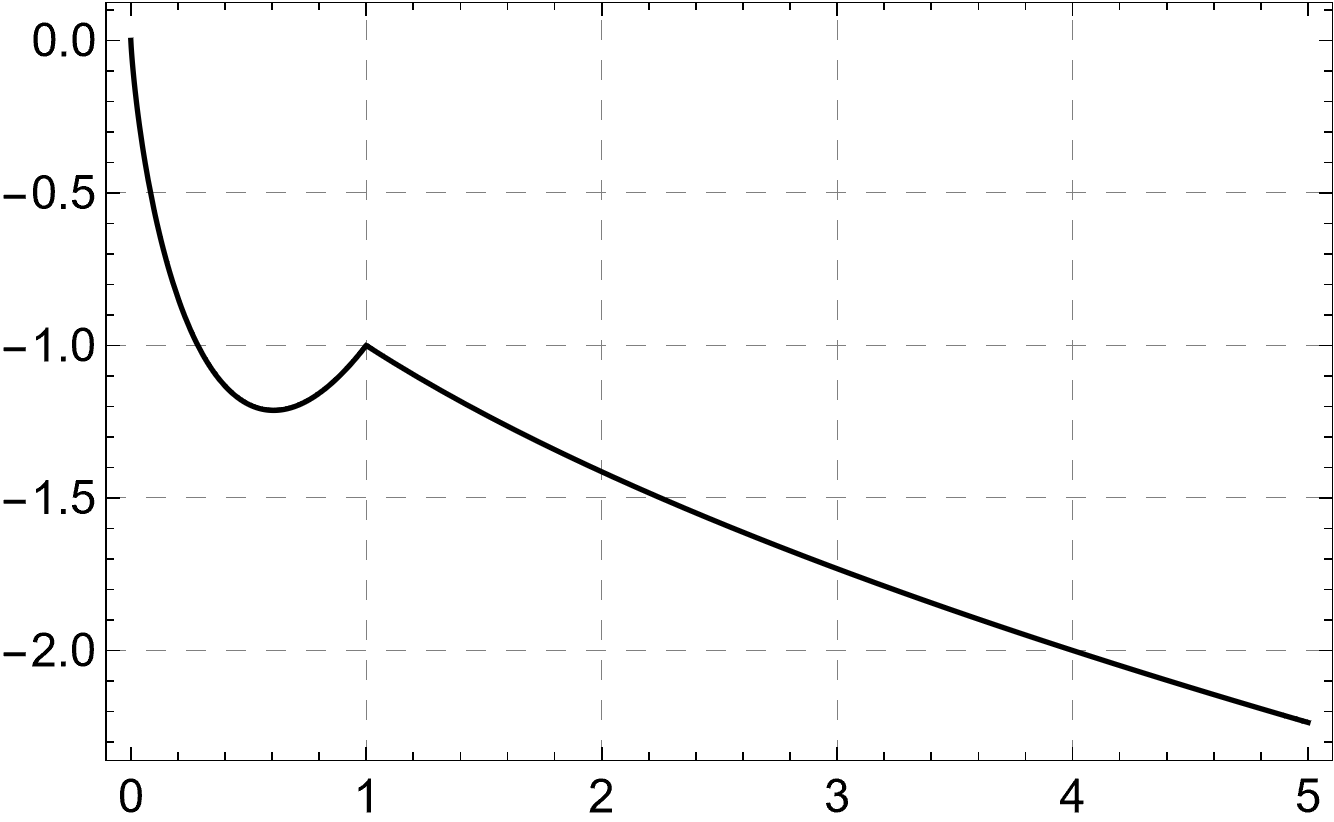}
    \caption{$\phi_3$}
    \label{subfig:phi3nonSD}
\end{subfigure}
\caption{Plots of the functions $\phi_i$.}
\label{fig:phi-nonSD}
\end{figure}

By Proposition~\ref{prop:Phimargins}, this implies that $\Phi$ is not globally non-increasing.  
Nevertheless, we now verify that the SD condition for
\[
\Phi(x_1,x_2,x_3)
= \phi_1(x_1)+\phi_2(x_2)+\phi_3(x_3)
\]
still holds.  
We claim that for all $x_1,x_2,x_3\ge 0$,
\begin{equation*}
\Phi(x_1,x_2,x_3)\;\ge\; \Phi(t,t,t),
\qquad t=x_1+x_2+x_3.
\end{equation*}

Since $\sqrt{x_1}+\sqrt{x_2}\le \sqrt{2(x_1+x_2)}=\sqrt{2(t-x_3)}$, we obtain
\[
\Phi(x_1,x_2,x_3)
= -\sqrt{x_1}-\sqrt{x_2}+\phi_3(x_3)
\;\ge\;
-\sqrt{2}\,\sqrt{t-x_3}+\phi_3(x_3).
\]

For fixed $t$, consider the function
\[
x_3\mapsto 2\sqrt{t}-\sqrt{2}\sqrt{t-x_3}+\phi_3(x_3)-\phi_3(t).
\]

It is convex on each smooth piece of $[0,t]$; hence its minimum occurs at one of the points  
$x_3\in\{0,1,t\}$. Evaluating at these points yields nonnegative values:
\begin{align*}
2\sqrt{t}-\sqrt{2\!}\sqrt{t-0}+\phi_3(0)-\phi_3(t)&\ge 0,\\
2\sqrt{t}-\sqrt{2\!}\sqrt{t-1}+\phi_3(1)-\phi_3(t)&\ge 0,\\
2\sqrt{t}-\sqrt{2\!}\sqrt{t-t}+\phi_3(t)-\phi_3(t)&=2\sqrt{t}\ge 0.
\end{align*}

Therefore,
\[
-\sqrt{2}\,\sqrt{t-x_3}+\phi_3(x_3)
\;\ge\;
-2\sqrt{t}+\phi_3(t)
= \Phi(t,t,t),
\]
and the claim follows.
\medskip

While the VaR of the sum $S$ can only be computed numerically, the VaRs of the margins are given explicitly. For $X_1$ and $X_2$,
\[
\VaR_p[X_1]=\VaR_p[X_2]
= \frac{1}{\log^2(1/p)},
\]
and for $X_3$,
\[
\VaR_p[X_3]=
\begin{cases}
\sqrt{e}\,\sqrt{p}, & 0<p\le \dfrac1e,\\[2mm]
\dfrac{1}{\log^2(1/p)}, & \dfrac1e < p < 1.
\end{cases}
\]

Figure~\ref{fig:VaRsnonSD} compares $\VaR_p[S]$ with the sum of marginal VaRs.
\begin{figure}[htb!]
\centering
\includegraphics[scale=0.7]{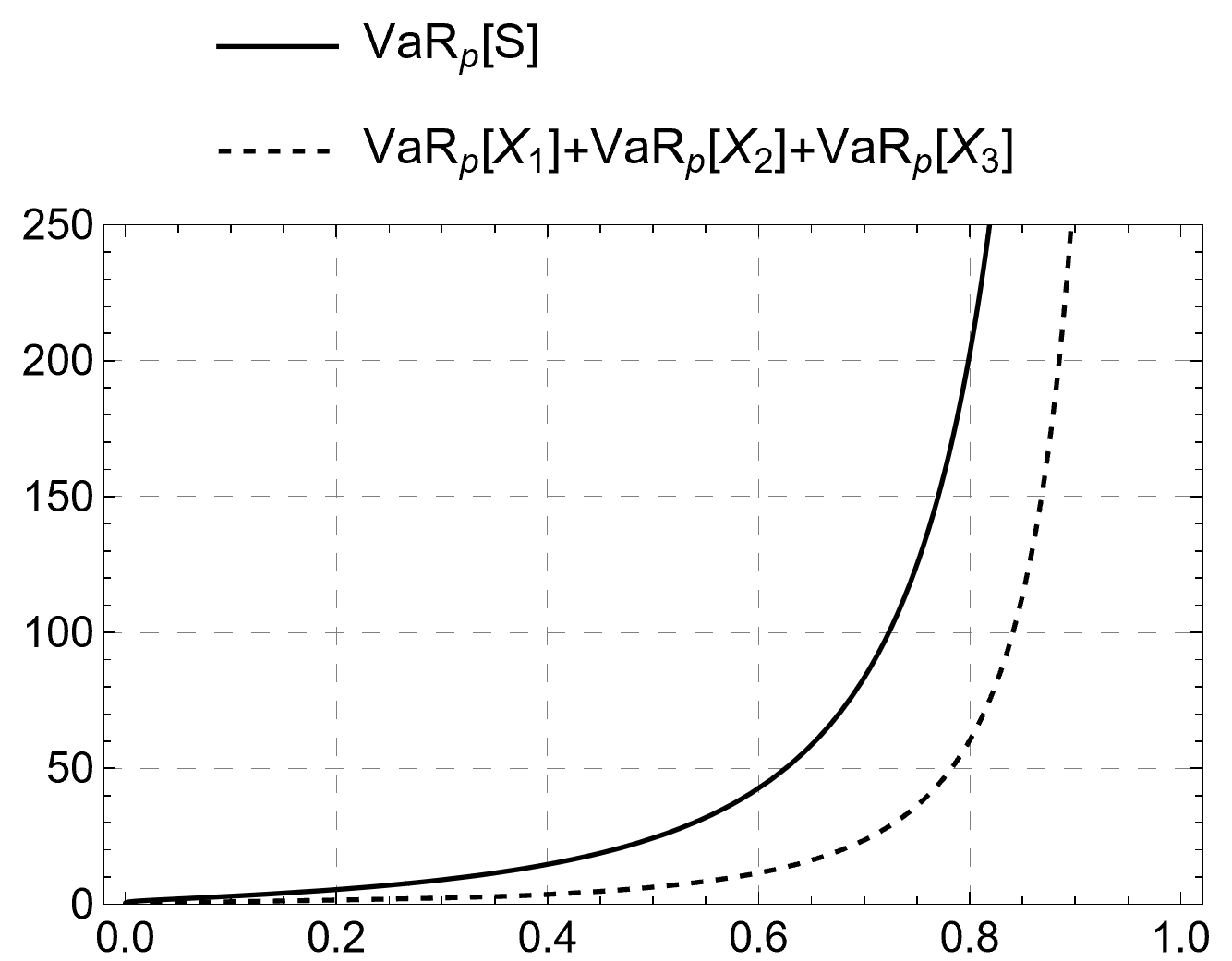}
\caption{Comparison of $\VaR_p[S]$ and $\VaR_p[X_1]+\VaR_p[X_2]+\VaR_p[X_3]$.}
\label{fig:VaRsnonSD}
\end{figure}

The plot shows that $\XX$ is VaR super-additive.  
This example illustrates that an NSD vector may have an SD aggregator $\Phi$ without $\Phi$ being globally non-increasing, while still exhibiting VaR super-additivity.
\end{example}

A natural question that follows any characterization of a property for random vectors is: under what transformations does the property persist? In this spirit, we examine the conditions under which the transformed random vector
\[
\widetilde{\XX}=(\xi_1(X_1),\dots,\xi_n(X_n)),
\]
where each $\xi_i:[0,\infty)\to[0,\infty)$, is measurable and preserves the property of {VaR super-additivity}. Specifically, we seek to identify assumptions on the functions $\xi_i$ that ensure $\widetilde{\XX}$ remains VaR super-additive whenever the original vector $\XX=(X_1,\dots,X_n)$ is already VaR super-additive.
\begin{proposition}
\label{prop:VaR_X_convex}
Let $\XX$ be NLOD with continuous marginal distributions $F_{X_i}$, and suppose each $\phi_i$ in Equation \eqref{eq:Phiexpression} is non-increasing. Define
\[
\widetilde{\XX}=(\widetilde{X}_1,\dots,\widetilde{X}_n), \quad \text{where } \widetilde{X}_i = \xi_i(X_i).
\]
If each $\xi_i$ is strictly increasing, convex, and satisfies $\xi_i(0) = 0$, then $\widetilde{\XX}$ is VaR super-additive.
\end{proposition}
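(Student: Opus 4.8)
The plan is to verify that the transformed vector $\widetilde\XX$ itself satisfies the hypotheses of Corollary~\ref{coro:NLOD_PhiNI}: that it is NLOD with continuous margins and that each associated function $\widetilde\phi_i(y_i)=y_i\log F_{\widetilde X_i}(y_i)$ is non-increasing. VaR super-additivity of $\widetilde\XX$ then follows immediately from that corollary, so the whole task reduces to transferring the two structural conditions through the maps $\xi_i$.

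First I would record the two elementary facts produced by a strictly increasing transformation. Since each $\xi_i$ is strictly increasing and continuous (being convex and finite on $[0,\infty)$ with $\xi_i(0)=0$), the event $\{\xi_i(X_i)\le y_i\}$ coincides with $\{X_i\le \xi_i^{-1}(y_i)\}$, so that
\[
F_{\widetilde\XX}(y_1,\dots,y_n)=F_{\XX}\bigl(\xi_1^{-1}(y_1),\dots,\xi_n^{-1}(y_n)\bigr),
\qquad
F_{\widetilde X_i}(y_i)=F_{X_i}\bigl(\xi_i^{-1}(y_i)\bigr).
\]
Each $F_{\widetilde X_i}$ is therefore continuous, and the NLOD inequality for $\XX$, evaluated at the point $\bigl(\xi_1^{-1}(y_1),\dots,\xi_n^{-1}(y_n)\bigr)\in[0,\infty)^n$, transfers verbatim to give $F_{\widetilde\XX}\le\prod_{i=1}^n F_{\widetilde X_i}$. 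Hence $\widetilde\XX$ is NLOD (and, by Proposition~\ref{prop:dependenceNSD}, NSD), supplying the dependence hypothesis of the corollary.

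The crux is the monotonicity of $\widetilde\phi_i$. Writing $y_i=\xi_i(x_i)$ and using that $\xi_i^{-1}$ is increasing, $\widetilde\phi_i$ is non-increasing in $y_i$ if and only if
\[
g_i(x_i):=\xi_i(x_i)\,\log F_{X_i}(x_i)=\frac{\xi_i(x_i)}{x_i}\,\phi_i(x_i)
\]
is non-increasing in $x_i$ on $(0,\infty)$, where $\phi_i(x_i)=x_i\log F_{X_i}(x_i)$ is non-increasing and non-positive (as $F_{X_i}\le 1$). The key structural input is that convexity of $\xi_i$ together with $\xi_i(0)=0$ makes the secant ratio $r_i(x_i):=\xi_i(x_i)/x_i$ non-decreasing, while strict monotonicity makes it strictly positive. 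Thus $g_i=r_i\,\phi_i$ is a product of a non-decreasing positive factor and a non-increasing non-positive factor, and a short sign chase finishes the argument: for $x\le x'$, the inequalities $\phi_i(x')\le\phi_i(x)\le 0$ and $0<r_i(x)\le r_i(x')$ give $r_i(x')\phi_i(x')\le r_i(x')\phi_i(x)\le r_i(x)\phi_i(x)$, i.e. $g_i(x')\le g_i(x)$. Hence each $\widetilde\phi_i$ is non-increasing, and Corollary~\ref{coro:NLOD_PhiNI} applies to $\widetilde\XX$.

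I expect the main obstacle to be this middle step: a product of an increasing and a decreasing function is not monotone in general, so it is essential to establish that $r_i$ is non-decreasing and then exploit the signs $\phi_i\le 0$, $r_i>0$ in the precise order above. A minor technical point is the behaviour at $x_i=0$, where $r_i$ and $\phi_i$ are formally indeterminate; this is harmless, because the conclusion only involves the arguments $x_i(p)=\VaR_p[X_i]>0$ for $p\in(0,1)$ (each margin having lower endpoint $0$), so it suffices to work on $(0,\infty)$.
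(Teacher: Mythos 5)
Your proposal is correct and takes essentially the same route as the paper's proof: both transfer NLOD and marginal continuity through the strictly increasing maps $\xi_i$, show each $\widetilde{\phi}_i$ is non-increasing by combining the non-decreasing secant ratio $\xi_i(x)/x$ (convexity plus $\xi_i(0)=0$) with the monotonicity and non-positivity of $\phi_i$, and conclude via Corollary~\ref{coro:NLOD_PhiNI}. Your factorization $g_i=r_i\,\phi_i$ and sign chase is just a renotated version of the paper's inequality chain (the paper's intermediate term $y_i\,\frac{\xi_i^{-1}(x_i)}{\xi_i^{-1}(y_i)}\log F_{X_i}\bigl(\xi_i^{-1}(x_i)\bigr)$ is exactly your $r_i(x')\phi_i(x)$), with your remark on the point $x_i=0$ being a harmless extra precaution the paper leaves implicit.
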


\begin{proof} 
First, since $\XX$ is NLOD and each $\xi_i$ is strictly increasing, we have
\[
F_{\widetilde{\XX}}(x_1,\dots,x_n) = F_{\XX}\big(\xi_1^{-1}(x_1),\dots,\xi_n^{-1}(x_n)\big) 
\le \prod_{i=1}^n F_{X_i}\big(\xi_i^{-1}(x_i)\big) 
= \prod_{i=1}^n F_{\widetilde{X}_i}(x_i),
\]
which establishes that $\widetilde{\XX}$ is NLOD.  

Moreover, strict monotonicity and convexity of $\xi_i$ imply that $\xi_i^{-1}$ is continuous and strictly increasing. Combined with the continuity of $F_{X_i}$, this ensures that each marginal CDF
\[
F_{\widetilde{X}_i} = F_{X_i} \circ \xi_i^{-1}
\]
is continuous.

Second, define
\[
\widetilde{\phi}_i(x_i) = x_i \log F_{\widetilde{X}_i}(x_i).
\]

For $x_i < y_i$, we have
\begin{align*}
\widetilde{\phi}_i(y_i)&=y_i\log F_{\widetilde{X}_i}(y_i), \\ &=y_i\log F_{X_i}\left(\xi_i^{-1}(y_i)\right), \\ &\le y_i\dfrac{\xi_i^{-1}(x_i)}{\xi_i^{-1}(y_i)}\log F_{X_i}\left(\xi_i^{-1}(x_i)\right),
\end{align*}
where the last inequality follows by applying the non-increasing property of $\phi_i$ to the strictly increasing pair $\xi_i^{-1}(x_i)< \xi_i^{-1}(y_i)$.

By convexity of $\xi_i$ and the condition $\xi_i(0)=0$, the secant slopes from the origin are non-decreasing: for $0<u_i<v_i$,
\[
\frac{\xi_i(u_i)}{u_i} \le \frac{\xi_i(v_i)}{v_i}.
\]

Setting $u_i = \xi_i^{-1}(x_i)$ and $v_i = \xi_i^{-1}(y_i)$ gives
\[
\frac{x_i}{\xi_i^{-1}(x_i)} \le \frac{y_i}{\xi_i^{-1}(y_i)} \quad \implies \quad x_i \le y_i \frac{\xi_i^{-1}(x_i)}{\xi_i^{-1}(y_i)}.
\]

Combining these results, and noting that $\log\circ F_{X_i}$ is a negative function, we obtain
\[
\widetilde{\phi}_i(y_i) \le y_i \frac{\xi_i^{-1}(x_i)}{\xi_i^{-1}(y_i)} \log F_{X_i}\big(\xi_i^{-1}(x_i)\big) \le x_i \log F_{X_i}\big(\xi_i^{-1}(x_i)\big) = \widetilde{\phi}_i(x_i),
\]
so $\widetilde{\phi}_i(x_i)$ is non-increasing for all $x_i \in[0,\infty)$.
 
Applying Corollary \ref{coro:NLOD_PhiNI}, we conclude that $\widetilde{\XX}$ is VaR super-additive.
\end{proof}

We conclude this section by noting that although the NSD and SD properties allow Theorem \ref{thm:Varsuperadditivity} to characterize a broad class of random vectors, they are not the only indicators of VaR super-additivity. The following example illustrates situations in which $\XX$ is VaR super-additive even when neither NSD nor SD is satisfied.
\begin{example}
Let $\XX=(X_1,X_2)$ be a bivariate random vector.

\begin{itemize}

\item[(1)] 
Assume that $\XX$ follows a bivariate Pareto distribution of Type~II with unit scale parameters and shape $0<\alpha\le 1$. Its joint DDF is
\[
\overline{F}_{\XX}(x_1,x_2)=(1+x_1+x_2)^{-\alpha},\qquad x_1,x_2\ge 0.
\]
Consequently, $X_1$ and $X_2$ have Pareto~(II) marginal CDFs with the same shape parameter:
\[
F_{X_1}(x)=F_{X_2}(x)=1-(1+x)^{-\alpha},\qquad x\ge 0.
\]

A direct computation shows that the CDF of the sum $S=X_1+X_2$ is
\[
F_S(s)=1-(1+s)^{-\alpha-1}\bigl(1+(\alpha+1)s\bigr),\qquad s\ge 0.
\]

We now compare $F_S(t)$ with $F_{X_1}(t)F_{X_2}(t)$.  
Since $0<\alpha\le 1$, by the mean value theorem for $t\mapsto t^{\alpha}$ on $\left[\dfrac{1}{1+t},1\right]$, one checks that
\[
\frac{\alpha t}{1+t}\le 1-(1+t)^{-\alpha},\qquad t\ge0.
\]
Using this inequality, we rewrite $F_S(t)$ as
\begin{align*}
F_S(t)
 &=1-(1+t)^{-\alpha-1}(1+(\alpha+1)t)
\\
 &=1-(1+t)^{-\alpha}\left(1+\frac{\alpha t}{1+t}\right)
\\
 &\ge 1-(1+t)^{-\alpha}\left(2-(1+t)^{-\alpha}\right)
\\
 &=1-2(1+t)^{-\alpha}+(1+t)^{-2\alpha}
\\
 &=\left(1-(1+t)^{-\alpha}\right)^2
\\
 &=F_{X_1}(t)\,F_{X_2}(t).
\end{align*}
Hence $F_S(t)\ge F_{X_1}(t)F_{X_2}(t)$ for all $t\ge0$, meaning that $\XX$ is \emph{not} NSD. In fact, it satisfies the opposite 
property, namely Positive Simplex Dependence.

From Example~\ref{example:Phi-non-increasing-1}, the functions $\phi_i$ are non-increasing for Pareto~(II) margins with $0<\alpha\le 1$, and therefore $\Phi$ is SD.

To compute VaRs, set $\alpha=1$ for simplicity. Then
\[
\VaR_p[X_1]=\VaR_p[X_2]=\frac{p}{1-p},
\qquad
\VaR_p[S]=\frac{p+\sqrt{p}}{1-p}.
\]
Since $\sqrt{p}\ge p$, we have
\[
\VaR_p[S]
 =\frac{p+\sqrt{p}}{1-p}
 \ge \frac{2p}{1-p}
 =\VaR_p[X_1]+\VaR_p[X_2],
\]
showing that $\XX$ is VaR super-additive.

\medskip

\item[(2)]
Next, suppose that $\XX$ is a mutually exclusive \citep{Dhaene1999} discrete vector supported on
\[
\bigl\{(2^{k},0):k\ge1\bigr\}\ \cup\ \bigl\{(0,2^{k}):k\ge1\bigr\},
\]
with joint probability masses
\[
\mathbb{P}\bigl((X_1,X_2)=(2^{k},0)\bigr)
=\mathbb{P}\bigl((X_1,X_2)=(0,2^{k})\bigr)
=\frac{1}{2^{\,k+1}},\qquad k\ge1.
\]

Using geometric-series identities, one obtains the marginal CDFs
\begin{align*}
F_{X_1}(x)=F_{X_2}(x)
  =\begin{cases}
     0, & x<0,\\[4pt]
     \dfrac12, & 0\le x<2,\\[4pt]
     1-2^{-(k+1)}, & 2^k\le x<2^{k+1},\quad k\ge1,
   \end{cases}
\end{align*}
and the CDF of the sum
\[
F_S(s)=
 \begin{cases}
   0, & s<2,\\[4pt]
   1-2^{-k}, & 2^k\le s<2^{k+1},\quad k\ge1.
 \end{cases}
\]

For $t<2$, we immediately see that $F_S(t)<F_{X_1}(t)F_{X_2}(t)$.  
For $t\ge2$ (with $2^k\le t<2^{k+1}$), we have
\[
F_S(t)=1-2^{-k}
  <1-2^{-k}+2^{-(2k+2)}
  =\bigl(1-2^{-(k+1)}\bigr)^2
  =F_{X_1}(t)F_{X_2}(t),
\]
so $\XX$ is NSD. This is expected as the random vector $\XX$ has a counter-monotonic joint law which belongs to the NSD class.

\medskip

The marginal CDFs are discontinuous, and the functions $\phi_i(x_i)=x_i\log F_{X_i}(x_i)$ are not non-increasing.  
To see the latter, take $x_i=2^k<y_i=2^{k+1}$, $k\ge1$. Then
\begin{align*}
\phi_i(y_i)-\phi_i(x_i)
 &=2^{k+1}\log\Bigl(1-2^{-(k+2)}\Bigr)
   -2^{k}\log\Bigl(1-2^{-(k+1)}\Bigr)
\\
 &=2^k\log\left(
     1+\frac{1}{2^{k+3}}\frac{1}{2^{k+1}-1}
   \right)
 >0.
\end{align*}
Thus $\phi_i$ is strictly increasing along the sequence $\{2^k\}$, and consequently $\Phi$ is not SD (take $x_1=x_2=2^k$ and $t=x_1+x_2=2^{k+1}$ then $\Phi(x_1,x_2)=\Phi(2^k,2^k)<\Phi(2^{k+1},2^{k+1})=\Phi(t,t)$).

\medskip

Finally, the VaR functions are
\begin{align*}
\VaR_p[X_1]=\VaR_p[X_2]
 &=\begin{cases}
     0, & 0<p\le\dfrac12,\\[4pt]
     2^k, & 1-2^{-k}<p\le 1-2^{-(k+1)},\quad k\ge1,
   \end{cases}
\\[6pt]
\VaR_p[S]
 &=2^k,\qquad 1-2^{-(k-1)}<p\le 1-2^{-k},\ \ k\ge1.
\end{align*}

Hence, for $0<p\le \dfrac12$,
\[
\VaR_p[S]=2>0=\VaR_p[X_1]+\VaR_p[X_2],
\]
and for any $k\ge2$ and the corresponding range of $p$,
\[
\VaR_p[S]=2^k=2^{k-1}+2^{k-1}
        =\VaR_p[X_1]+\VaR_p[X_2].
\]
Therefore, $\XX$ is VaR super-additive in this case as well.
\end{itemize}
\end{example}
\section{Further Generalizations and Remarks}
\label{sec:Generalizations}
In many actuarial applications, losses are sometimes bounded below by a positive constant $a_i$. 
A canonical example is the franchise deductible \citep{Klugman2019}, under which the 
full loss is paid only when it exceeds a fixed threshold, effectively shifting the support 
to $[a_i, \infty)$. The results of Sections~\ref{sec:VaR_subadditivity} 
and~\ref{sec:VaR_super-additivity} extend naturally to random variables with arbitrary 
finite lower endpoints
\[
a_i = \sup\{x \in \mathbb{R} : F_{X_i}(x) \le 0\} > -\infty,
\qquad \forall i \in \{1,\dots,n\},
\]
and, symmetrically, to random variables with arbitrary finite upper endpoints
\[
b_i = \inf\{x \in \mathbb{R} : F_{X_i}(x) \ge 1\} < \infty, 
\qquad \forall i \in \{1,\dots,n\}.
\]
We denote the corresponding shifted and reflected random vectors and their aggregate sums by
\[
\XX^{\aa}=(X_1^{a_1},\dots,X_n^{a_n}),\quad \aa=(a_1,\dots,a_n)\in\RR^n, 
\qquad 
\XX^{\bb}=(X_1^{b_1},\dots,X_n^{b_n}),\quad \bb=(b_1,\dots,b_n)\in\RR^n,
\]
\[
S^{\aa}=\sum_{i=1}^n X_i^{a_i},
\qquad 
S^{\bb}=\sum_{i=1}^n X_i^{b_i},
\]
where $\XX^{\mathbf{0}}=\XX$ recovers the previously studied case of risks supported 
on $[0,\infty)$. As the following remark makes clear, all results in this section are 
consequences of the translation and reflection equivariance of VaR; they are stated 
explicitly in the shifted and reflected parameterizations for the reader's convenience 
and for direct applicability in actuarial practice.

\begin{remark}
Since $X_i^{a_i} = a_i + X_i$ with $X_i \ge 0$, the shifted vector $\XX^{\aa}$ 
reduces to the zero-endpoint case via $\VaR_p[X_i^{a_i}] = a_i + \VaR_p[X_i]$. 
Similarly, $X_i^{b_i} = b_i - X_i$ and the reflection identity 
$\VaR_p[b_i - X_i] = b_i - \VaR_{1-p}[X_i]$ reduce VaR super-additivity (resp. VaR sub-additivity) of $\XX^{\bb}$ to VaR sub-additivity (resp. VaR super-additivity) of $\XX$, handled by Section~\ref{sec:VaR_subadditivity} (resp. Section \ref{sec:VaR_super-additivity}). 
The propositions below make these reductions explicit in parameterizations that arise 
naturally in practice.
\end{remark}
\begin{proposition}
\label{prop:XaXbImpossibility}
The following equivalences hold:
\begin{itemize}
\item[(i)] $\XX^{\aa}$ is VaR sub-additive if and only if $\XX^{\aa}$ is VaR additive.
\item[(ii)] $\XX^{\bb}$ is VaR super-additive if and only if $\XX^{\bb}$ is VaR additive.
\end{itemize}
In both cases, the random vectors $\XX^{\aa}$ and $\XX^{\bb}$ must be co-monotonic.
\end{proposition}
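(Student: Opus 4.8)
The plan is to reduce both equivalences to Theorem~\ref{thm:VaRsubadditive} by exploiting the translation and reflection equivariance of VaR recorded just above the statement. The unifying observation is that every affine reparametrization of the margins leaves the additivity gap $\VaR_p[S]-\sum_{i=1}^n\VaR_p[X_i]$ either unchanged or merely reflected in the probability level, so the rigid dichotomy already proved for the base vector $\XX$ (which has zero lower endpoints) transfers verbatim.

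For part (i) I would write $S^{\aa}=\sum_{i=1}^n a_i+S$ and use $\VaR_p[S^{\aa}]=\sum_{i=1}^n a_i+\VaR_p[S]$ together with $\sum_{i=1}^n\VaR_p[X_i^{a_i}]=\sum_{i=1}^n a_i+\sum_{i=1}^n\VaR_p[X_i]$. Cancelling the common constant $\sum_{i=1}^n a_i$ shows that $\XX^{\aa}$ is VaR sub-additive if and only if $\XX$ is, and VaR additive if and only if $\XX$ is. Since $\XX$ has zero lower endpoints, Theorem~\ref{thm:VaRsubadditive} applies directly and gives the equivalence in (i) together with the co-monotonicity of $\XX$. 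Because each $X_i^{a_i}=a_i+X_i$ is a strictly increasing transformation of $X_i$, and co-monotonicity is invariant under strictly increasing marginal maps, $\XX^{\aa}$ is co-monotonic as well.

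For part (ii) the reflection identity $\VaR_p[b_i-X_i]=b_i-\VaR_{1-p}[X_i]$ is the key. Writing $S^{\bb}=\sum_{i=1}^n b_i-S$ yields $\VaR_p[S^{\bb}]=\sum_{i=1}^n b_i-\VaR_{1-p}[S]$ and $\sum_{i=1}^n\VaR_p[X_i^{b_i}]=\sum_{i=1}^n b_i-\sum_{i=1}^n\VaR_{1-p}[X_i]$. Cancelling $\sum_{i=1}^n b_i$ and negating reverses the inequality, so $\XX^{\bb}$ is VaR super-additive at level $p$ exactly when $\VaR_{1-p}[S]\le\sum_{i=1}^n\VaR_{1-p}[X_i]$. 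As $p$ ranges over $(0,1)$ so does $1-p$, hence $\XX^{\bb}$ is VaR super-additive for all $p$ if and only if $\XX$ is VaR sub-additive for all levels. Theorem~\ref{thm:VaRsubadditive} then forces $\XX$ to be VaR additive and co-monotonic, and reversing the identities shows $\XX^{\bb}$ is VaR additive, completing the equivalence.

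The step requiring the most care is the co-monotonicity claim in part (ii), since $X_i^{b_i}=b_i-X_i$ is strictly \emph{decreasing} in $X_i$. I would argue that applying the same strictly decreasing map $t\mapsto b_i-t$ to every margin of a co-monotonic vector preserves co-monotonicity: writing $X_i=g_i(U)$ almost surely with each $g_i$ increasing and a common $U$, we get $X_i^{b_i}=(b_i-g_i)(U)$, and reparametrizing by $V=-U$ expresses each component as a common increasing function of $V$. Equivalently, one checks $F_{\XX^{\bb}}(y_1,\dots,y_n)=\P(X_i\ge b_i-y_i\ \forall i)=\min_{i}\P(X_i\ge b_i-y_i)=\min_i F_{X_i^{b_i}}(y_i)$ using the survival Fréchet upper bound valid for co-monotonic vectors, which is exactly the Fréchet upper bound for $\XX^{\bb}$. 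The only remaining subtlety is the consistent use of the stated reflection identity at atoms of $F_{X_i}$, which is precisely what the translation and reflection equivariance of VaR guarantees.
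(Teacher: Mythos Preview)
Your proposal is correct and follows essentially the same route as the paper: both parts are reduced to Theorem~\ref{thm:VaRsubadditive} via the translation and reflection equivariance identities recorded just before the proposition, with the inequality direction flipping in~(ii) under the map $p\mapsto 1-p$. You are actually more explicit than the paper about why co-monotonicity survives the strictly decreasing coordinate maps in~(ii); the paper simply appeals to Theorem~\ref{thm:VaRsubadditive} for co-monotonicity of $\XX$ and leaves the transfer to $\XX^{\aa}$ and $\XX^{\bb}$ implicit.
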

\noindent The proof is relegated to Appendix A\ref{app:XaXbImpossibility}.
\newline

The preceding proposition highlights an important structural contrast:  VaR sub-additivity cannot occur for random variables with finite lower endpoints, while VaR super-additivity cannot occur for random variables with finite upper endpoints.

\begin{remark}
For compactly supported random variables $\XX^{\aa,\bb}$ i.e. those with finite lower and upper endpoints, Proposition \ref{prop:XaXbImpossibility} implies that VaR sub-additivity and VaR super-additivity are each equivalent to VaR additivity. Consequently, strict forms of either property are impossible in this setting. This conclusion aligns with the result established for integrable random variables in \citet{Imamura2025}.
\end{remark}
The limitations of VaR in the prior discussion motivates the search for conditions, analogous to those developed in Section~\ref{sec:VaR_super-additivity}, that permit the analysis of VaR super- and sub-additivity in more flexible settings. That prompts us to extend the general results of Section~\ref{sec:VaR_super-additivity} to these shifted and scaled settings.  
In particular, the following proposition provides analogous conditions for VaR super-additivity of the shifted vector $\XX^{\aa}$ and VaR sub-additivity of the reflected and shifted vector $\XX^{\bb}$.
\begin{proposition}
\label{prop:VaR_super-extension}
~\\
\begin{itemize}
\item[(i)]
Suppose $\XX^{\aa}$ has continuous marginal CDFs $F_{X_i^{a_i}}$ and satisfies
\begin{equation}
\label{eq:XaNSD}
F_{S^{\aa}}(t+a_+) \le \prod_{i=1}^n F_{X_i^{a_i}}(t+a_i),
\qquad 
a_+ = \sum_{i=1}^n a_i,\quad \forall t \in [0,\infty),
\end{equation}
and that the function
\begin{equation}
\label{eq:XaSD}
\Phi^{\aa}(x_1,\dots,x_n)
    = \sum_{i=1}^n x_i \log F_{X_i^{a_i}}(x_i+a_i),\qquad x_i\in[0,\infty),
\end{equation}

is SD. Then $\XX^{\aa}$ is VaR super-additive.

\item[(ii)]
Suppose $\XX^{\bb}$ has continuous marginal DDFs $\overline{F}_{X_i^{b_i}}$ and satisfies
\begin{equation}
\label{eq:XbNSD}
\overline{F}_{S^{\bb}}(b_+ - t) 
    \le \prod_{i=1}^n \overline{F}_{X_i^{b_i}}(b_i - t),
\qquad 
b_+ = \sum_{i=1}^n b_i,\quad \forall t \in [0,\infty),
\end{equation}
and that the function
\begin{equation}
\label{eq:XbSD}
\Phi^{\bb}(x_1,\dots,x_n)
    = \sum_{i=1}^n x_i \log \overline{F}_{X_i^{b_i}}(b_i-x_i),\qquad x_i\in[0,\infty),
\end{equation}
is SD. Then $\XX^{\bb}$ is VaR sub-additive.
\end{itemize}
\end{proposition}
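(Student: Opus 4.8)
The plan is to reduce both statements to Theorem~\ref{thm:Varsuperadditivity} for the base vector $\XX$, exploiting the translation and reflection equivariance of VaR recorded just above the proposition. The guiding observation is that the hypotheses \eqref{eq:XaNSD}--\eqref{eq:XaSD} and \eqref{eq:XbNSD}--\eqref{eq:XbSD} have been written precisely so that, once the shift by $\aa$ or the reflection through $\bb$ is undone, they collapse verbatim onto the NSD inequality $F_S(t)\le\prod_{i=1}^n F_{X_i}(t)$ and the SD property of $\Phi(x_1,\dots,x_n)=\sum_{i=1}^n x_i\log F_{X_i}(x_i)$ for $\XX$ itself. Thus no new analytic content is needed beyond bookkeeping with the two equivariance identities together with the continuity hypotheses.

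For part~(i) I would first note that $\VaR_p[X_i^{a_i}]=a_i+\VaR_p[X_i]$ and $\VaR_p[S^{\aa}]=a_++\VaR_p[S]$, so that $\VaR_p[S^{\aa}]-\sum_i\VaR_p[X_i^{a_i}]=\VaR_p[S]-\sum_i\VaR_p[X_i]$; hence $\XX^{\aa}$ is VaR super-additive if and only if $\XX$ is. Next I would record the exact identities $F_{X_i^{a_i}}(t+a_i)=\P(X_i\le t)=F_{X_i}(t)$ and $F_{S^{\aa}}(t+a_+)=\P(S\le t)=F_S(t)$, which turn \eqref{eq:XaNSD} into the NSD inequality for $\XX$, and the identity $\Phi^{\aa}(x_1,\dots,x_n)=\sum_i x_i\log F_{X_i}(x_i)=\Phi(x_1,\dots,x_n)$, so that the SD of $\Phi^{\aa}$ is the SD of $\Phi$. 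Since continuity of $F_{X_i^{a_i}}$ forces continuity of $F_{X_i}$, Theorem~\ref{thm:Varsuperadditivity} applies to $\XX$ and the equivalence above concludes part~(i).

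Part~(ii) follows the same template but through the reflection $\XX^{\bb}=\bb-\XX$, which inverts the probability level. Using $\VaR_p[X_i^{b_i}]=b_i-\VaR_{1-p}[X_i]$ and $\VaR_p[S^{\bb}]=b_+-\VaR_{1-p}[S]$, VaR sub-additivity of $\XX^{\bb}$ at level $p$ reads $\VaR_{1-p}[S]\ge\sum_i\VaR_{1-p}[X_i]$, i.e.\ VaR super-additivity of $\XX$ at level $1-p$; letting $p$ range over $(0,1)$ covers all levels. To match the hypotheses I would compute $\overline{F}_{X_i^{b_i}}(b_i-t)=\P(X_i<t)=F_{X_i}(t)$ (the last step using continuity of the marginal) and $\overline{F}_{S^{\bb}}(b_+-t)=\P(S<t)$, so that $\Phi^{\bb}=\Phi$ is SD and \eqref{eq:XbNSD} supplies $\P(S<t)\le\prod_i F_{X_i}(t)$. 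Rather than invoke Theorem~\ref{thm:Varsuperadditivity} as a black box, I would re-run its final two steps directly: writing $\sigma(p)=\sum_i\VaR_p[X_i^{b_i}]=b_+-s(1-p)$ with $s(q)=\sum_i\VaR_q[X_i]$, it suffices for sub-additivity to prove $F_{S^{\bb}}(\sigma(p))\ge p$, equivalently $\overline{F}_{S^{\bb}}(\sigma(p))\le 1-p$; condition \eqref{eq:XbNSD} at $t=s(1-p)$ bounds $\overline{F}_{S^{\bb}}(\sigma(p))$ by $\prod_i F_{X_i}(s(1-p))$, and the SD-plus-continuity computation of Theorem~\ref{thm:Varsuperadditivity} bounds that product by $1-p$.

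The main obstacle is the reflection in part~(ii): the level inversion $p\mapsto 1-p$ together with the fact that the reflected survival function produces the strict event $\{S<t\}$, i.e.\ the left limit $F_S(t^-)$ rather than $F_S(t)$. I would handle this by phrasing sub-additivity through the clean sufficient direction ``$F_{S^{\bb}}(\sigma(p))\ge p\Rightarrow \VaR_p[S^{\bb}]\le\sigma(p)$'', which needs only the bound on $\overline{F}_{S^{\bb}}$ that \eqref{eq:XbNSD} directly provides; this routes around ever having to convert $\P(S<t)$ back into $F_S(t)$, so the atom subtlety never interferes. All remaining steps are the routine equivariance bookkeeping already used in part~(i).
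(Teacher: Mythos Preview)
Your proposal is correct and, for part~(i), coincides exactly with the paper's proof: both undo the shift to recover the NSD inequality and the SD function $\Phi$ for $\XX$, invoke Theorem~\ref{thm:Varsuperadditivity}, and then translate back. For part~(ii) the underlying idea is also the same---reduce via the reflection $\XX^{\bb}=\bb-\XX$---but the execution differs slightly. The paper simply asserts the identity $\overline{F}_{S^{\bb}}(b_+-t)=F_S(t)$ and then applies Theorem~\ref{thm:Varsuperadditivity} to $\XX$ as a black box; you instead observe that reflection produces the strict event $\{S<t\}=F_S(t^-)$ and therefore re-run the last two lines of that theorem directly on $\XX^{\bb}$, using only the bound $\overline{F}_{S^{\bb}}(\sigma(p))\le\prod_i F_{X_i}(s(1-p))\le 1-p$. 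Your route is a touch more careful: it sidesteps the atom issue in $F_S$ without needing any limiting argument, whereas the paper's stated identity is literally correct only at continuity points of $F_S$ (though the NSD conclusion can still be recovered by taking $t_n\downarrow t$ and using right-continuity). Either way the argument is short and the two approaches are essentially interchangeable.
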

The complete proof appears in Appendix A\ref{app:VaR_super-extension}.
\newline

Using the results we obtained in Proposition \ref{prop:VaR_super-extension}, we can now delineate the sufficient conditions that parallel those of 
Propositions~\ref{prop:dependenceNSD} and~\ref{prop:Phimargins}. 
These conditions are easily verifiable and ensure that $\XX^{\aa}$ (resp.\ $\XX^{\bb}$) is VaR super-additive 
(resp.\ VaR sub-additive).
\begin{proposition}
%\phantomsection
\label{prop:VaR_super-extension-sufficient}
~\\
\begin{itemize}
\item[(i)] If $\XX^{\aa}$ is NLOD with continuous $F_{X_i^{a_i}}$, and if each function appearing in 
Equation~\eqref{eq:XaSD},
\[
\phi_i^{a_i}(x_i)=x_i\log F_{X_i^{a_i}}(x_i+a_i),\qquad x_i\in[0,\infty),
\]
is non-increasing, then $\XX^{\aa}$ is VaR super-additive.

\item[(ii)] If $\XX^{\bb}$ is NUOD (defined analogously to NLOD but with DDFs instead of CDFs) with 
continuous $\overline{F}_{X_i^{b_i}}$, and if each function appearing in Equation~\eqref{eq:XbSD},
\[
\phi_i^{b_i}(x_i)=x_i\log \overline{F}_{X_i^{b_i}}(b_i-x_i),\qquad x_i\in[0,\infty),
\]
is non-increasing, then $\XX^{\bb}$ is VaR sub-additive.
\end{itemize}
\end{proposition}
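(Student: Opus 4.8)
The plan is to reduce both parts to Proposition~\ref{prop:VaR_super-extension} by verifying its two hypotheses—the NSD-type orthant inequality and the SD property of the aggregator—directly from the NLOD (resp.\ NUOD) assumption and the monotonicity of the $\phi_i^{a_i}$ (resp.\ $\phi_i^{b_i}$). This mirrors the route from Propositions~\ref{prop:dependenceNSD} and~\ref{prop:Phimargins} to Corollary~\ref{coro:NLOD_PhiNI} in the unshifted setting.

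For part~(i), I would first establish the inequality~\eqref{eq:XaNSD}. Writing $X_i^{a_i}=a_i+X_i$ with $X_i\ge 0$, the event $\{S^{\aa}\le t+a_+\}$ coincides with $\{S\le t\}$, and since each $X_j\le S$ on this event we get $X_i^{a_i}=a_i+X_i\le a_i+t$ for every $i$. Hence $\{S^{\aa}\le t+a_+\}\subseteq\{X_i^{a_i}\le a_i+t\ \forall i\}$, the very inclusion of the simplex inside a corner of the cube used in Proposition~\ref{prop:dependenceNSD}. Taking probabilities and applying NLOD along the diagonal gives
\[
F_{S^{\aa}}(t+a_+)\le F_{\XX^{\aa}}(a_1+t,\dots,a_n+t)\le\prod_{i=1}^n F_{X_i^{a_i}}(a_i+t),
\]
which is precisely~\eqref{eq:XaNSD}. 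Next I would check that $\Phi^{\aa}$ is SD: each summand $\phi_i^{a_i}(x_i)=x_i\log F_{X_i^{a_i}}(x_i+a_i)$ is non-positive (since $\log F_{X_i^{a_i}}\le 0$) and non-increasing by hypothesis, so by the second part of Proposition~\ref{prop:Phimargins} the sum $\Phi^{\aa}=\sum_i\phi_i^{a_i}$ is non-increasing, and by the first part it is therefore SD. Invoking Proposition~\ref{prop:VaR_super-extension}(i) concludes that $\XX^{\aa}$ is VaR super-additive.

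For part~(ii), the argument is the reflected analogue. Now $X_i^{b_i}=b_i-X_i\le b_i$, so $S^{\bb}=b_+-S$ and, on the event $\{S^{\bb}>b_+-t\}$, bounding each $X_j^{b_j}\le b_j$ yields $X_i^{b_i}=S^{\bb}-\sum_{j\ne i}X_j^{b_j}\ge S^{\bb}-(b_+-b_i)>b_i-t$. This gives the inclusion $\{S^{\bb}>b_+-t\}\subseteq\{X_i^{b_i}>b_i-t\ \forall i\}$, and taking probabilities together with the NUOD inequality for joint survival functions (applied along the reflected diagonal) establishes~\eqref{eq:XbNSD}. The SD property of $\Phi^{\bb}$ follows exactly as before from the monotonicity of the $\phi_i^{b_i}$ via Proposition~\ref{prop:Phimargins}, and Proposition~\ref{prop:VaR_super-extension}(ii) then yields VaR sub-additivity of $\XX^{\bb}$.

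I expect the only delicate point to be the reflected orthant geometry in part~(ii): one must verify that the inclusion of events runs the correct way for upper orthants and survival functions (strict versus non-strict inequalities), which is where continuity of the $\overline{F}_{X_i^{b_i}}$ enters, and that the NUOD hypothesis is invoked along the reflected diagonal $(b_1-t,\dots,b_n-t)$ rather than everywhere. The remainder is a direct transcription of the unshifted case.
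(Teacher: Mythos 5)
Your proof is correct and follows essentially the same route as the paper: you verify the shifted NSD condition \eqref{eq:XaNSD} (resp.\ \eqref{eq:XbNSD}) via the NLOD (resp.\ NUOD) hypothesis applied along the shifted diagonal together with the simplex-inside-box inclusion---which you phrase as an event inclusion, the paper phrases geometrically, but these are identical---then obtain the SD property from Proposition~\ref{prop:Phimargins} and conclude via Proposition~\ref{prop:VaR_super-extension}. No gaps; your reflected-orthant bound $X_i^{b_i}\ge S^{\bb}-(b_+-b_i)$ in part~(ii) is exactly the reversed-simplex containment the paper uses.
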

The result is proved in Appendix~A\ref{app:VaR_super-extension-sufficient}.
\newline

We end this section by investigating what happens if we take measurable functions of the components of $\XX^{\aa}$ (resp. $\XX^{\bb}$) when VaR super-additivity (resp. VaR sub-additivity) holds. The results are direct extension of those in Proposition \ref{prop:VaR_X_convex}.
\begin{proposition}
\label{prop:XaXb_Xi_convex}
~\\
\begin{itemize}
\item[(i)] Suppose $\XX^{\aa}$ is NLOD with continuous margins $F_{X_i^{a_i}}$, and assume that each $\phi_i^{a_i}$ in Equation \eqref{eq:XaSD} is non-increasing. Let 
\[
\widetilde{\XX}^{\aa}=\left(\widetilde{X}_1^{a_1},\dots,\widetilde{X}_n^{a_n}\right),
\]
where $\widetilde{X}_i^{a_i}=\xi_i(X_i^{a_i})$ for $\xi_i:[a_i,\infty)\to[a_i,\infty)$. If each $\xi_i$ is strictly increasing, convex, and satisfies $\xi_i(a_i)=a_i$, then $\widetilde{\XX}^{\aa}$ is VaR super-additive.

\item[(ii)] Assume $\XX^{\bb}$ is NUOD with continuous margins $\overline{F}_{X_i^{b_i}}$, and suppose that each $\phi_i^{b_i}$ in Equation \eqref{eq:XbSD} is non-increasing. Define
\[
\widetilde{\XX}^{\bb}=\left(\widetilde{X}_1^{b_1},\dots,\widetilde{X}_n^{b_n}\right),
\]
where $\widetilde{X}_i^{b_i}=\xi_i(X_i^{b_i})$ for $\xi_i:(-\infty,b_i]\to(-\infty,b_i]$. If each $\xi_i$ is strictly increasing, convex, and satisfies $\xi_i(b_i)=b_i$, then $\widetilde{\XX}^{\bb}$ is VaR sub-additive.
\end{itemize}
\end{proposition}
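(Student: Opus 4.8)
The plan is to reduce both parts to Proposition~\ref{prop:VaR_X_convex} by absorbing the shift (in part (i)) and the reflection-plus-shift (in part (ii)) into an equivalent transformation acting on a zero-endpoint base vector, and then transferring the conclusion back through the translation and scale equivariance of VaR recorded at the start of this section.

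For part (i), I would set $\XX=\XX^{\aa}-\aa$, so that each $X_i=X_i^{a_i}-a_i$ has zero lower endpoint, and define $\eta_i:[0,\infty)\to[0,\infty)$ by $\eta_i(t)=\xi_i(t+a_i)-a_i$. A direct check gives $\widetilde{X}_i^{a_i}=\xi_i(a_i+X_i)=a_i+\eta_i(X_i)$, i.e.\ $\widetilde{\XX}^{\aa}=\aa+(\eta_1(X_1),\dots,\eta_n(X_n))$. Since shifting both argument and value preserves monotonicity and convexity, $\eta_i$ is strictly increasing, convex, and satisfies $\eta_i(0)=\xi_i(a_i)-a_i=0$. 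I would then verify that $\XX$ inherits the hypotheses of Proposition~\ref{prop:VaR_X_convex}: NLOD and continuity of $F_{X_i}=F_{X_i^{a_i}}(\,\cdot\,+a_i)$ are immediate from the corresponding properties of $\XX^{\aa}$, while the functions in Equation~\eqref{eq:Phiexpression} coincide with those in Equation~\eqref{eq:XaSD}, namely $\phi_i(x_i)=x_i\log F_{X_i}(x_i)=\phi_i^{a_i}(x_i)$, hence are non-increasing by assumption. Proposition~\ref{prop:VaR_X_convex} then yields that $(\eta_1(X_1),\dots,\eta_n(X_n))$ is VaR super-additive, and translation equivariance propagates this to $\widetilde{\XX}^{\aa}$.

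For part (ii), I would exploit the reflection identity used throughout Section~\ref{sec:Generalizations}: by translation and scale equivariance, $\widetilde{\XX}^{\bb}$ is VaR sub-additive if and only if the zero-endpoint vector $\bb-\widetilde{\XX}^{\bb}$ is VaR super-additive. Writing $X_i=b_i-X_i^{b_i}$ and noting $b_i-\widetilde{X}_i^{b_i}=b_i-\xi_i(b_i-X_i)=\eta_i(X_i)$ with $\eta_i(t)=b_i-\xi_i(b_i-t)$, the problem again reduces to showing that $(\eta_1(X_1),\dots,\eta_n(X_n))$ is VaR super-additive through Proposition~\ref{prop:VaR_X_convex}. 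Here the NUOD property of $\XX^{\bb}$ along the reflected diagonal translates, exactly as in Proposition~\ref{prop:VaR_super-extension-sufficient}, into the NLOD/NSD requirement for $\XX$, and $\phi_i^{b_i}(x_i)=x_i\log\overline{F}_{X_i^{b_i}}(b_i-x_i)=x_i\log F_{X_i}(x_i)$ is precisely the base function of Equation~\eqref{eq:XbSD}.

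The main obstacle lies in the curvature bookkeeping of the induced map $\eta_i$. Proposition~\ref{prop:VaR_X_convex} is driven by the secant-from-the-origin monotonicity $\eta_i(u)/u\le\eta_i(v)/v$ for $u<v$, which rests on $\eta_i$ being convex with $\eta_i(0)=0$; together with the non-increasing $\phi_i$ this is exactly what forces each $\widetilde{\phi}_i$ to be non-increasing and lets Corollary~\ref{coro:NLOD_PhiNI} close the argument. In part (i) the pure shift preserves convexity, so this is automatic. In part (ii), however, the reflection $t\mapsto b_i-\xi_i(b_i-t)$ introduces a sign change in the second derivative, $\eta_i''(t)=-\xi_i''(b_i-t)$, so that curvature is reversed. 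The one genuinely delicate step is therefore to confirm that the stated hypothesis on $\xi_i$ delivers an induced base map $\eta_i$ with the correct curvature (non-decreasing secant slopes from the origin), since it is this property, and not the mere monotonicity of $\xi_i$, that is consumed by Proposition~\ref{prop:VaR_X_convex}; I would verify this curvature transfer under reflection carefully before invoking the reduction.
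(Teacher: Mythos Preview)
Your reduction in part~(i) is clean and correct: the shift $\eta_i(t)=\xi_i(t+a_i)-a_i$ preserves strict monotonicity, convexity, and the fixed point at the origin; NLOD and continuity pass to $\XX=\XX^{\aa}-\aa$; and $\phi_i^{a_i}$ coincides with the base $\phi_i$ of Equation~\eqref{eq:Phiexpression}. Proposition~\ref{prop:VaR_X_convex} then applies and translation equivariance transports the conclusion back. This is the same content as the paper's argument, only packaged as a reduction rather than redone from scratch.

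For part~(ii), the step you flag as ``genuinely delicate'' does not merely need care---it fails. Since $\eta_i''(t)=-\xi_i''(b_i-t)$, convexity of $\xi_i$ forces $\eta_i$ to be \emph{concave}, and a concave $\eta_i$ with $\eta_i(0)=0$ has \emph{non-increasing} secant slopes from the origin, the opposite of what Proposition~\ref{prop:VaR_X_convex} consumes. The paper's direct argument trips on the same point: after fixing $x_i<y_i$ it substitutes $u_i=\xi_i^{-1}(b_i-x_i)$ and $v_i=\xi_i^{-1}(b_i-y_i)$ into the convex secant inequality ``for $u_i<v_i<b_i$'', but in fact $u_i>v_i$, so the displayed inequality $\tfrac{x_i}{b_i-\xi_i^{-1}(b_i-x_i)}\le\tfrac{y_i}{b_i-\xi_i^{-1}(b_i-y_i)}$ is reversed. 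A concrete obstruction: with $b_i=0$, independent $X_i^{b_i}=-Y_i$ for $Y_i$ Fr\'echet of shape $\tfrac12$ (so $\phi_i^{b_i}(t)=-\sqrt{t}$ is non-increasing), and $\xi_i(x)=-(-x)^{1/3}$ (strictly increasing, convex, $\xi_i(0)=0$), one gets $\widetilde X_i^{b_i}=-Y_i^{1/3}$ with $Y_i^{1/3}$ Fr\'echet of shape $\tfrac32$, hence integrable; by the Imamura equivalence recalled in Section~\ref{sec:VaR_super-additivity}, independent integrable margins cannot be VaR super-additive without being co-monotonic, so $\widetilde\XX^{\bb}$ is not VaR sub-additive. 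Your reduction---and the paper's direct computation---both go through if the hypothesis in part~(ii) is changed to $\xi_i$ \emph{concave}, which is the natural reflection of the convexity assumption in part~(i).
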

The proof is provided in Appendix~A\ref{app:XaXb_Xi_convex}
\section{Conclusions}
\label{sec:conclusions}
This paper provides a comprehensive characterization of the extremal aggregation behavior of Value-at-Risk for sums of one-sided random variables. We first established an impossibility result that is a direct extension of the finding in \citet{Imamura2025}. For risks supported on \([0,\infty)\), possibly non-integrable, VaR sub-additivity can arise only through exact additivity -- a phenomenon exclusive to co-monotonic vectors. On the opposite end of the spectrum, we developed a general and flexible framework for full VaR super-additivity. The key insight is that super-additivity does not follow from dependence or marginal structure in isolation, but from their joint interaction as captured by the NSD and SD conditions. These conditions unify and extend existing results in \citet{Chen2026}, while accommodating non-identical margins and a diverse range of negative dependence structures.

We further showed that the theory remains robust under monotone convex transformations of the components, and that analogous principles govern aggregation when the random variables have arbitrary finite endpoints. Taken together, the results reveal a sharp dichotomy: in lower-bounded settings, VaR is structurally incompatible with sub-additivity yet naturally exhibits super-additivity under suitable dependence–margin configurations, whereas the pattern is reversed in upper-bounded settings. This characterization not only clarifies the conditions under which VaR behaves as a diversification-averse or diversification-seeking risk measure, but also offers practical criteria for detecting such behavior in applications involving heavy tails or negatively dependent risks.
\newpage
%%%%%%%%%%%%%%%%%%%%%%%%%%%%%%%%%%%%%%%%%%%%%%%%%%%%%%

%%%%%%%%%%%%%%%%%%%%%%%%%%%%%%%%%%%%%%%%%%%%%%%%%%%%%%%%%%%%%%%%%%%%%%%%%%%%%%%%%%%%%%%%%%%%%%%%%%%%%
\newpage
\begin{appendices}
\renewcommand{\thesection}{\Alph{section}}
\section{Proofs of Section \ref{sec:Generalizations}}
%\addcontentsline{toc}{section}{Proofs}
\renewcommand{\thesubsection}{\roman{subsection}}
\subsection{Proof of Proposition \ref{prop:XaXbImpossibility}}
\label{app:XaXbImpossibility}
\begin{proof}
The proof in each case follows from the translation and scale equivariance properties of VaR.

\medskip
\noindent{(i)}  
Using translation equivariance,
\[
\XX^{\aa}\text{ is VaR sub-additive} \iff \XX\text{ is VaR sub-additive}.
\]
By Theorem \ref{thm:VaRsubadditive},
\[
\XX^{\aa}\text{ is VaR sub-additive} \iff \XX\text{ is VaR additive}.
\]
Applying translation equivariance once more yields
\[
\XX^{\aa}\text{ is VaR sub-additive} 
\iff 
\XX^{\aa}\text{ is VaR additive}.
\]

\medskip
\noindent{(ii)}  
Using both scale and translation equivariance,
\[
\XX^{\bb}\text{ is VaR super-additive} 
\iff 
\XX\text{ is VaR sub-additive}.
\]
Applying Theorem \ref{thm:VaRsubadditive} again gives
\[
\XX^{\bb}\text{ is VaR super-additive}
\iff 
\XX\text{ is VaR additive}.
\]
Repeating the equivariance arguments leads to
\[
\XX^{\bb}\text{ is VaR super-additive}
\iff 
\XX^{\bb}\text{ is VaR additive}.
\]

Finally, in both parts, co-monotonicity follows directly from Theorem \ref{thm:VaRsubadditive}.
\end{proof}
\subsection{Proof of Proposition \ref{prop:VaR_super-extension}}
\label{app:VaR_super-extension}
\begin{proof}
Continuity of each $F_{X_i}$ follows from the continuity of $F_{X_i^{a_i}}$ or of $\overline{F}_{X_i^{b_i}}$.

\medskip
\noindent{(i)}  
Since $\XX^{\aa}=\aa+\XX$ and $S^{\aa}=S+a_+$, we have
\[
F_{S^{\aa}}(t+a_+)=F_S(t),
\qquad 
F_{X_i^{a_i}}(x_i+a_i)=F_{X_i}(x_i).
\]
Thus the condition in Equation \eqref{eq:XaNSD} implies
\[
F_{S}(t) \le \prod_{i=1}^n F_{X_i}(t),\qquad \forall t\in[0,\infty),
\]
i.e. $\XX$ is NSD. Moreover, if
\[
\Phi^{\aa}(x_1,\dots,x_n)
    = \sum_{i=1}^n x_i\log F_{X_i^{a_i}}(x_i+a_i)
\]
is SD, then so is
\[
\Phi(x_1,\dots,x_n)
    = \sum_{i=1}^n x_i\log F_{X_i}(x_i).
\]
By Theorem \ref{thm:Varsuperadditivity}, $\XX$ is VaR super-additive.  
Translation equivariance then gives that $\XX^{\aa}$ is VaR super-additive.

\medskip
\noindent{(ii)}  
Since $\XX^{\bb}=\bb-\XX$ and $S^{\bb}=b_+-S$, we obtain
\[ 
\overline{F}_{S^{\bb}}(b_+-t)=F_S(t),
\qquad
\overline{F}_{X_i^{b_i}}(b_i-x_i)=F_{X_i}(x_i).
\]
Applying the same reasoning as in part (i), the given assumptions imply that $\XX$ is VaR super-additive.  
Using both scale and translation equivariance, we conclude that $\XX^{\bb}$ is VaR sub-additive.
\end{proof}
\subsection{Proof of Proposition \ref{prop:VaR_super-extension-sufficient}}
\label{app:VaR_super-extension-sufficient}
\begin{proof}
\begin{itemize}
\item[(i)]  
We begin by verifying that the condition in Equation~\eqref{eq:XaNSD} holds.  
Since $\XX^{\aa}$ is NLOD, we have
\[
F_{\XX^{\aa}}(x_1,\dots,x_n)
    \le \prod_{i=1}^n F_{X_i^{a_i}}(x_i),
    \qquad \forall x_i\in[a_i,\infty).
\]
To relate this to the distribution of the shifted sum $S^{\aa}$, observe that the $n$-box 
$[a_1,x_1]\times \dots \times [a_n,x_n]$ contains the $n$-simplex with origin $(a_1,\dots,a_n)$ 
and vertices
\[
\{ (x_1,a_2,\dots,a_n), (a_1,x_2,\dots,a_n), \dots, (a_1,a_2,\dots,x_n) \}.
\]
Setting each $x_i=t+a_i$ with $t\in[0,\infty)$ ensures that this simplex lies inside the box, and 
therefore
\begin{align*}
F_{S^{\aa}}(t+a_+) 
    \le F_{\XX^{\aa}}(t+a_1,\dots,t+a_n)
    &\le \prod_{i=1}^n F_{X_i^{a_i}}(t+a_i),
    \qquad \forall t\in[0,\infty),
\\[2mm]
\implies\quad
F_{S^{\aa}}(t+a_+) 
    &\le \prod_{i=1}^n F_{X_i^{a_i}}(t+a_i).
\end{align*}
Hence the requirement in Equation~\eqref{eq:XaNSD} is satisfied.  
As in Proposition~\ref{prop:Phimargins}, note that it actually suffices for $\XX^{\aa}$ to be NLOD 
only along the shifted diagonal $(t+a_1,\dots,t+a_n)$, since this is the only region relevant for the 
comparison with $S^{\aa}$.

Next, if each function $\phi_i^{a_i}$ is non-increasing, then by 
Proposition~\ref{prop:Phimargins}, the function $\Phi^{\aa}$ is SD.  
Combining this property with the continuity of each $F_{X_i^{a_i}}$, we may invoke 
Proposition~\ref{prop:VaR_super-extension} to conclude that $\XX^{\aa}$ is VaR super-additive.

\item[(ii)]
The proof mirrors that of part (i).  
Using the NUOD property of $\XX^{\bb}$, we obtain
\[
\overline{F}_{\XX^{\bb}}(x_1,\dots,x_n)
    \le \prod_{i=1}^n \overline{F}_{X_i^{b_i}}(x_i),
    \qquad \forall x_i\in(-\infty,b_i].
\]
In this setting, the $n$-box $[x_1,b_1]\times\dots\times[x_n,b_n]$ contains the “reversed” $n$-simplex 
with origin $(b_1,\dots,b_n)$ and vertices
\[
\{(x_1,b_2,\dots,b_n), (b_1,x_2,\dots,b_n), \dots, (b_1,b_2,\dots,x_n)\}.
\]
Setting $x_i=b_i-t$ with $t\in[0,\infty)$ gives
\begin{align*}
\overline{F}_{S^{\bb}}(b_+-t)
    \le \overline{F}_{\XX^{\bb}}(b_1-t,\dots,b_n-t)
    &\le \prod_{i=1}^n \overline{F}_{X_i^{b_i}}(b_i-t),\qquad \forall t\in[0,\infty),
\\[2mm]
\implies\quad
\overline{F}_{S^{\bb}}(b_+-t)
    &\le \prod_{i=1}^n \overline{F}_{X_i^{b_i}}(b_i-t).
\end{align*}
Thus the condition in Equation~\eqref{eq:XbNSD} holds.  
Again, as in part (i), it suffices that the NUOD property holds only along the shifted diagonal 
$(b_1-t,\dots,b_n-t)$.

Finally, if each $\phi_i^{b_i}$ is non-increasing, then Proposition~\ref{prop:Phimargins} guarantees 
that $\Phi^{\bb}$ is SD.  
Together with continuity of each $\overline{F}_{X_i^{b_i}}$, Proposition~\ref{prop:VaR_super-extension} implies that 
$\XX^{\bb}$ is VaR sub-additive.
\end{itemize}
\end{proof}
\subsection{Proof of Proposition \ref{prop:XaXb_Xi_convex}}
\label{app:XaXb_Xi_convex}
\begin{proof}
The argument follows the same structure as Proposition \ref{prop:VaR_X_convex}. Under the stated assumptions, two observations hold immediately:

\begin{enumerate}
\item[•] Since the margins $F_{X_i^{a_i}}$ and $\overline{F}_{X_i^{b_i}}$ are continuous and each $\xi_i$ is strictly increasing and convex, it follows that the transformed margins $F_{\widetilde{X}_i^{a_i}}$ and $\overline{F}_{\widetilde{X}_i^{b_i}}$ are also continuous.

\item[•] The strict monotonicity of the mappings $\xi_i$ ensures that the NLOD (resp. NUOD) property of $\XX^{\aa}$ (resp. $\XX^{\bb}$) is preserved by the coordinate-wise transformation, so $\widetilde{\XX}^{\aa}$ (resp. $\widetilde{\XX}^{\bb}$) is likewise NLOD (resp. NUOD).
\end{enumerate}

Thus, it remains to verify that $\widetilde{\phi}_i^{a_i}$ and $\widetilde{\phi}_i^{b_i}$ are non-increasing.
\begin{itemize}
\item[(i)] Case of $\widetilde{\phi}_i^{a_i}$:
Fix $x_i<y_i$. Then,
\begin{align*}
\widetilde{\phi}_i^{a_i}(y_i)
&=y_i\log F_{\widetilde{X}_i^{a_i}}(y_i+a_i)
\\
&=y_i\log F_{X_i^{a_i}}\left(\xi_i^{-1}(y_i+a_i)\right).
\end{align*}
Applying the non-increasing property of $\phi_i^{a_i}$ to the strictly increasing pair
\[
\xi_i^{-1}(x_i+a_i)-a_i<\xi_i^{-1}(y_i+a_i)-a_i
\]
yields
\[
\widetilde{\phi}_i^{a_i}(y_i)
\le
y_i\frac{\xi_i^{-1}(x_i+a_i)-a_i}{\xi_i^{-1}(y_i+a_i)-a_i}
\log F_{X_i^{a_i}}(\xi_i^{-1}(x_i+a_i)).
\]

Next, the convexity of $\xi_i$ and the condition $\xi_i(a_i)=a_i$ imply that the secant slopes from $a_i$ are non-decreasing: for all $a_i<u_i<v_i$,
\[
\frac{\xi_i(u_i)-a_i}{u_i-a_i}
\le 
\frac{\xi_i(v_i)-a_i}{v_i-a_i}.
\]
With $u_i=\xi_i^{-1}(x_i+a_i)$ and $v_i=\xi_i^{-1}(y_i+a_i)$, this becomes
\[
\frac{x_i}{\xi_i^{-1}(x_i+a_i)-a_i}
\le
\frac{y_i}{\xi_i^{-1}(y_i+a_i)-a_i},
\]
which is equivalent to
\[
x_i\le y_i\frac{\xi_i^{-1}(x_i+a_i)-a_i}{\xi_i^{-1}(y_i+a_i)-a_i}.
\]

Since $\log\circ F_{X_i^{a_i}}$ is negative, combining the inequalities gives
\begin{align*}
\widetilde{\phi}_i^{a_i}(y_i)
&\le
y_i
\frac{\xi_i^{-1}(x_i+a_i)-a_i}{\xi_i^{-1}(y_i+a_i)-a_i}
\log F_{X_i^{a_i}}(\xi_i^{-1}(x_i+a_i))
\\
&\le
x_i\log F_{X_i^{a_i}}(\xi_i^{-1}(x_i+a_i))
=\widetilde{\phi}_i^{a_i}(x_i).
\end{align*}
Hence, $\widetilde{\phi}_i^{a_i}$ in \eqref{eq:XaSD} is non-increasing on $[0,\infty)$.  
By Proposition \ref{prop:VaR_super-extension-sufficient}, we conclude that $\widetilde{\XX}^{\aa}$ is VaR super-additive.
\item[(ii)] Case of $\widetilde{\phi}_i^{b_i}$: An analogous argument applies. Let $x_i<y_i$. Then
\begin{align*}
\widetilde{\phi}_i^{b_i}(y_i)
&=y_i\log\overline{F}_{\widetilde{X}_i^{b_i}}(b_i-y_i)
\\
&=y_i\log\overline{F}_{X_i^{b_i}}\bigl(\xi_i^{-1}(b_i-y_i)\bigr).
\end{align*}
Applying the non-increasing property of $\phi_i^{b_i}$ to the strictly increasing pair
\[
b_i-\xi_i^{-1}(b_i-x_i)<b_i-\xi_i^{-1}(b_i-y_i)
\]
gives
\[
\widetilde{\phi}_i^{b_i}(y_i)
\le 
y_i\frac{b_i-\xi_i^{-1}(b_i-x_i)}{\,b_i-\xi_i^{-1}(b_i-y_i)}
\log\overline{F}_{X_i^{b_i}}\bigl(\xi_i^{-1}(b_i-x_i)\bigr).
\]

Furthermore, convexity of $\xi_i$ and the constraint $\xi_i(b_i)=b_i$ imply that secant slopes from $b_i$ are non-decreasing: for $u_i<v_i<b_i$,
\[
\frac{b_i-\xi_i(u_i)}{b_i-u_i}
\le 
\frac{b_i-\xi_i(v_i)}{b_i-v_i}.
\]
Substituting $u_i=\xi_i^{-1}(b_i-x_i)$ and $v_i=\xi_i^{-1}(b_i-y_i)$ yields
\[
\frac{x_i}{b_i-\xi_i^{-1}(b_i-x_i)}
\le
\frac{y_i}{b_i-\xi_i^{-1}(b_i-y_i)},
\]
which is equivalent to
\[
x_i\le y_i\frac{b_i-\xi_i^{-1}(b_i-x_i)}{b_i-\xi_i^{-1}(b_i-y_i)}.
\]

Since $\log\circ \overline{F}_{X_i^{b_i}}$ is negative, we conclude
\begin{align*}
\widetilde{\phi}_i^{b_i}(y_i)
&\le
y_i\frac{b_i-\xi_i^{-1}(b_i-x_i)}{b_i-\xi_i^{-1}(b_i-y_i)}
\log\overline{F}_{X_i^{b_i}}\bigl(\xi_i^{-1}(b_i-x_i)\bigr)
\\
&\le
x_i\log\overline{F}_{X_i^{b_i}}\bigl(\xi_i^{-1}(b_i-x_i)\bigr)
=\widetilde{\phi}_i^{b_i}(x_i).
\end{align*}
Thus $\widetilde{\phi}_i^{b_i}$ in \eqref{eq:XbSD} is non-increasing on $[0,\infty)$, and by Proposition \ref{prop:VaR_super-extension-sufficient}, $\widetilde{\XX}^{\bb}$ is VaR sub-additive.
\end{itemize}
\end{proof}
\end{appendices}

\begin{thebibliography}{}

\bibitem[Acerbi and Tasche, 2002]{Acerbi.2002}
Acerbi, C. and Tasche, D. (2002).
\newblock {Expected Shortfall: A Natural Coherent Alternative to Value at
  Risk}.
\newblock {\em Economic Notes}, 31(2):379--388.

\bibitem[Arab et~al., 2025]{Arab2025}
Arab, I., Lando, T., and Oliveira, P.~E. (2025).
\newblock Convex combinations of random variables stochastically dominate the
  parent for a new class of heavy tailed distributions.
\newblock {\em Electronic Communications in Probability}, 30(none).

\bibitem[Block et~al., 1982]{Block1982a}
Block, H.~W., Savits, T.~H., and Shaked, M. (1982).
\newblock {Some Concepts of Negative Dependence}.
\newblock {\em The Annals of Probability}, 10(3):765 -- 772.

\bibitem[Block et~al., 1998]{Block1998}
Block, H.~W., Savits, T.~H., and Singh, H. (1998).
\newblock The reversed hazard rate function.
\newblock {\em Probability in the Engineering and Informational Sciences},
  12(1):69–90.

\bibitem[Chen et~al., 2025]{Chen2025}
Chen, Y., Embrechts, P., and Wang, R. (2025).
\newblock Technical note—an unexpected stochastic dominance: Pareto
  distributions, dependence, and diversification.
\newblock {\em Operations Research}, 73(3):1336--1344.

\bibitem[Chen and Shneer, 2026]{Chen2026}
Chen, Y. and Shneer, S. (2026).
\newblock Risk aggregation and stochastic dominance for a class of heavy-tailed
  distributions.
\newblock {\em ASTIN Bulletin}, 56(1):206–219.

\bibitem[Daníelsson et~al., 2013]{Danielsson2013}
Daníelsson, J., Jorgensen, B.~N., Samorodnitsky, G., Sarma, M., and de~Vries,
  C.~G. (2013).
\newblock Fat tails, var and subadditivity.
\newblock {\em Journal of Econometrics}, 172(2):283--291.

\bibitem[Dhaene and Denuit, 1999]{Dhaene1999}
Dhaene, J. and Denuit, M. (1999).
\newblock The safest dependence structure among risks.
\newblock {\em Insurance: Mathematics and Economics}, 25(1):11--21.

\bibitem[Dhaene et~al., 2002]{Dhaene2002}
Dhaene, J., Denuit, M., Goovaerts, M.~J., Kaas, R., and Vyncke, D. (2002).
\newblock {The concept of comonotonicity in actuarial science and finance:
  theory}.
\newblock {\em Insurance: Mathematics and Economics}, 31(1):3--33.

\bibitem[Embrechts et~al., 1997]{Embrechts1997}
Embrechts, P., Kl{\"{u}}ppelberg, C., and Mikosch, T. (1997).
\newblock {\em {Modelling Extremal Events}}.
\newblock Springer, Heidelberg.

\bibitem[Embrechts et~al., 2008]{Embrechts2008}
Embrechts, P., Lambrigger, D.~D., and Wüthrich, M.~V. (2008).
\newblock Multivariate extremes and the aggregation of dependent risks:
  examples and counter-examples.
\newblock {\em Extremes}, 12(2):107--127.

\bibitem[Embrechts et~al., 2009]{Embrechts2009b}
Embrechts, P., Nešlehová, J., and Wüthrich, M.~V. (2009).
\newblock Additivity properties for value-at-risk under archimedean dependence
  and heavy-tailedness.
\newblock {\em Insurance: Mathematics and Economics}, 44(2):164--169.

\bibitem[Gautschi, 1959]{Gautschi1959}
Gautschi, W. (1959).
\newblock Some elementary inequalities relating to the gamma and incomplete
  gamma function.
\newblock {\em Journal of Mathematics and Physics}, 38(1–4):77--81.

\bibitem[Ibragimov, 2009]{Ibragimov2009}
Ibragimov, R. (2009).
\newblock Portfolio diversification and value at risk under
  thick-tailedness†.
\newblock {\em Quantitative Finance}, 9(5):565--580.

\bibitem[Imamura and Kato, 2025]{Imamura2025}
Imamura, Y. and Kato, T. (2025).
\newblock A note on subadditivity of value at risks (vars): A new connection to
  comonotonicity.
\newblock {\em Journal of Applied Probability}, pages 1--5.

\bibitem[Joe, 1997]{Joe1997}
Joe, H. (1997).
\newblock {\em {Multivariate Models and Dependence Concepts}}.
\newblock Chapman and Hall, London.

\bibitem[Klugman et~al., 2019]{Klugman2019}
Klugman, S.~A., Panjer, H.~H., and Willmot, G.~E. (2019).
\newblock {\em Loss Models: From Data to Decisions}.
\newblock Wiley, Hoboken, NJ, 5 edition.

\bibitem[Linsmeier and Pearson, 2000]{Linsmeier2000}
Linsmeier, T.~J. and Pearson, N.~D. (2000).
\newblock Value at risk.
\newblock {\em Financial Analysts Journal}, 56(2):47--67.

\bibitem[McNeil et~al., 2015]{McNeil2015}
McNeil, A.~J., Frey, R., and Embrechts, P. (2015).
\newblock {\em {Quantitative Risk Management : Concepts, Techniques and
  Tools}}.
\newblock Princeton University Press, Princeton.

\bibitem[Mills, 1926]{Mills1926}
Mills, J.~P. (1926).
\newblock Table of the ratio: Area to bounding ordinate, for any portion of
  normal curve.
\newblock {\em Biometrika}, 18(3/4):395--400.

\bibitem[Müller, 2025]{Mueller2025}
Müller, A. (2025).
\newblock Some remarks on the effect of risk sharing and diversification for
  infinite mean risks.
\newblock {\em ASTIN Bulletin}, 55(3):747–756.

\bibitem[Nelsen, 2010]{10.5555/1952073}
Nelsen, R.~B. (2010).
\newblock {\em {An Introduction to Copulas}}.
\newblock Springer Publishing Company, Incorporated.

\bibitem[Tasche, 2002]{Tasche2002}
Tasche, D. (2002).
\newblock Expected shortfall and beyond.
\newblock {\em Journal of Banking {\&} Finance}, 26(7):1519--1533.

\bibitem[Zhu et~al., 2023]{Zhu2023}
Zhu, W., Li, L., Yang, J., Xie, J., and Sun, L. (2023).
\newblock Asymptotic subadditivity/superadditivity of value‐at‐risk under
  tail dependence.
\newblock {\em Mathematical Finance}, 33(4):1314--1369.

\end{thebibliography}
\end{document}